\newtheorem{thm}{Theorem}
\newtheorem{rmk}{Remark}
\newcommand{\dT}{\mathsf{T}}
\def\squarebox#1{\hbox to #1{\hfill\vbox to #1{\vfill}}}
\def\boxit#1{\vbox{\hrule\hbox{\vrule\kern6pt
			\vbox{\kern6pt#1\kern6pt}\kern6pt\vrule}\hrule}}
\newcommand{\blind}{1}
\date{}
\begin{document}

	\def\spacingset#1{\renewcommand{\baselinestretch}%
		{#1}\small\normalsize} \spacingset{1}

	
	\if1\blind
	{
		\title{\bf Refining Genetic Discoveries of Group Knockoffs via A Feature-level Filter}
		\author{Jiaqi Gu\\
			Department of Mathematics and Statistics, University of South Florida\\
			and\\
			Zhaomeng Chen\\
			Department of Statistics, Stanford University\\
			and\\
			Zihuai He\\
			Department of Neurology and Neurological Sciences, Stanford University\\
			Department of Medicine (Biomedical InformaticsResearch), Stanford University\\
			Department of Biomedical Data Science, Stanford University
		}
		\maketitle
	} \fi
	
	\if0\blind
	{
		\bigskip
		\bigskip
		\bigskip
		\begin{center}
			{\Large\bf Supplementary Materials of ``Refining Genetic Discoveries of Group Knockoffs via A Feature-level Filter"}
		\end{center}
		\medskip
	} \fi
	
	\bigskip
	\begin{abstract}
	Identifying variants that carry substantial information on the trait of interest remains a core topic in genetic studies. In analyzing the EADB-UKBB dataset to identify genetic variants associated with Alzheimer's disease (AD), however, we recognize that both existing marginal association tests and conditional independence tests using existing knockoff filters suffer either power loss or lack of informativeness, especially when strong correlations exist among variants. To address these limitations, we propose a new feature-versus-group (FVG) filter that {achieves balance between the power and precision} in identifying important features from a set of strongly correlated features using group knockoffs. In extensive simulation studies, the FVG filter controls the expected proportion of false discoveries and identifies important features {in smaller catching sets without large power loss}. Applying the proposed method to the EADB-UKBB dataset, we discover important variants from 89 loci (similar to the most powerful group knockoff filter) with catching sets of substantially smaller size and higher purity {and verify the biological informativeness of our discoveries}.
	\end{abstract}
	
	\noindent%
	{\it Keywords: Alzheimer’s disease genetics, False discovery rate, Genetic variant selection, Knockoffs.}  
	\vfill
	
	\newpage
	
	\section{Introduction}\label{Introduction}
	
	\subsection{Conditional Independence Test to Discover Disease-associated Variants}\label{Tests}
	
	\noindent Testing conditional independence between a set of features $\textbf{X}=(X_1,\ldots,X_p)^\dT$ and the response of interest $Y$ is an important topic in various research areas, including causal inference \citep{Peters2015,Cai2022}, genetic analysis \citep{Khera2017,Zhu2018} and graphical model learning \citep{Deka2016,Tugnait2022}. In the era of big data, as the number of features increases, the need for statistical approaches for simultaneous inference of conditional independence between hundreds of thousands of features and the response keeps emerging.

	Specifically, developing statistical methods for conditional independence tests with type-I error rate control has been an important and popular research topic of large-scale genome-wide association studies (GWAS; \citealp{Tang2021,Hou2023,Morra2023}) in recent years. With the ultimate goal of identifying novel targets for the development of genomic-driven medicine, the contribution of more genetic variants to complex phenotypes has been investigated in GWAS. However, our ability to detect causal genetic variants and to translate discoveries into insights of the underlying genetic mechanisms does not increase proportionally to the growth of the scale of GWAS. On one hand, with the stringent criterion to control the family-wise error rate (FWER), conventional GWAS are usually suboptimal in terms of the statistical power of detecting important variants. On the other hand, because conventional GWAS base their inference on marginal association models which regress $Y$ on one $X_j$ at a time, they can only provide blurred results with a lot of proxy variants correlated with the true causal variants \citep{Schaid2018}. Both issues become more severe when more variants are sequenced at a higher resolution because the $p$-value threshold of FWER control also becomes more stringent and there are more proxy variants strongly correlated with true causal variants. This calls for new statistical inference methods with better theoretical properties and empirical performance in multiple testing of conditional independence for real-world genetic data analysis.
	

	In this article, we investigate the stage I meta-analysis of the European Alzheimer \& Dementia Biobank (EADB) dataset and the UK Biobank (UKBB) dataset \citep{Bellenguez2022} to identify genetic variants associated with AD.  This meta-analyzed EADB--UKBB dataset is deposited in the European Bioinformatics Institute GWAS Catalog (\url{https://www.ebi.ac.uk/gwas/}; accession number: GCST90027158), including $39,106$ clinically diagnosed AD cases, $46,828$ proxy cases, and $401,577$ control cases from $15$ European countries. We focus on $650,706$ directly genotyped variants with minor allele frequency (MAF) at least $0.01$, resulting in a processed dataset $\mathbb{G}$ of $487,511$ observations and $650,706$ features. With such a large dataset, we aim to discover as many as possible important genetic variants that carry independent information of AD without making too many false discoveries. Mathematically, such a task can be transferred into a multiple testing problem of conditional independence between a large set of variants $\textbf{X}=(X_1,\ldots,X_p)^\dT$ and the binary response $Y$ which indicates whether a case is clinically diagnosed AD. 
	
	\subsection{Existing Approaches and Their Limitations}\label{Existing}
	
	\noindent Recent decades have witnessed fruitful works in the development of multiple testing procedures. To control the family-wise error rate (FWER) of making at least one false discovery, a lot of $p$-value  based methods have been developed as improvements of the classic Bonferroni correction, including the Šidák correction \citep{Sidak1967}, Holm's step-down procedure \citep{Holm1979} and Hochberg's step-up procedure \citep{Hochberg1988}. However, FWER has been criticized for its conservativeness in many settings, especially when most of the signals are weak. To perform powerful inference, \citet{Benjamini1995} proposed the false discovery rate (the expected proportion of false discoveries in all discoveries) as an alternative type-I error rate measure and developed a procedure to perform multiple testing with provable FDR control under the assumption that $p$-values are independent or positively dependent \citep{Yekutieli2001}. Inspired by this pioneering work, \citet{Yekutieli2001} developed a generalized procedure without assumptions on dependencies among $p$-values, while \citet{Storey2002} and \citet{Whittemore2007} investigated the Bayesian counterpart of FDR. However, as most of the aforementioned methods require $p$-values of nulls for all features, their feasibility on multiple testing of conditional independence becomes questionable when the number of features $p$ greatly exceeds the sample size $n$. The main reason is that $p$-values of conditional independence under high-dimensional scenarios are nontrivial and generally not available except in a very few specific cases \citep{Tibshirani2016,Lee2016}. 

	Unlike all the above methods, the recently developed model-X knockoff filter \citep{Candes2018}  and its variations \citep{Gimenez2019,Gimenez2019b,Barber2019,Bates2020,Huang2020,Ren2023,Li2024} are a family of approaches that can  control finite-sample FDR in multiple testing of conditional independence hypotheses at the feature-versus-feature level
	\begin{equation}
		\label{H_ii}
		H^{(\text{ff})}_{j}:X_j\perp Y|\textbf{X}_{-j},\text{ where  }\textbf{X}_{-j}=(X_1,\ldots,X_{j-1},X_{j+1},\ldots,X_p)^\dT, \quad j=1,\ldots,p,
	\end{equation}
	without any assumption on the conditional distribution $Y|\textbf{X}$. Specifically, these methods first construct knockoffs
	$\widetilde{\textbf{X}}$ that mimic the dependency among features $\textbf{X}$ and are not dependent on the response $Y$. By doing so, knockoffs play the role of synthetic controls: if $H^{(\text{ff})}_{j}$ is true, ${X}_j$ and $\widetilde{X}_j$ are indistinguishable even with information of $Y$; if $H^{(\text{ff})}_{j}$ is false, ${X}_j$ and $\widetilde{X}_j$ possess different dependencies with $Y$. By comparing the dependencies between $Y$ and  $\textbf{X}$ with those between $Y$ and  $\widetilde{\textbf{X}}$, these methods have both controlled FDR and superior power in detecting true causal features as validated in extensive simulation studies. 
	
	However, in the analysis of real-world genetic data {where genetic variants are the features to be selected}, methods that infer $H^{(\text{ff})}_{j}${'s} usually suffer severe power loss. The main reason is that linkage disequilibrium (LD) could introduce high correlations among genetic variants, making it difficult to distinguish variants that are truly associated with the response (whose $H^{(\text{ff})}_{j}$'s are false) from their strongly correlated null variants (whose $H^{(\text{ff})}_{j}$'s are true). This would significantly decrease the number of discoveries as shown by the simulation studies in  \citet{Barber2015} and \citet{Candes2018}. Such a power loss issue would deteriorate when more genetic variants with strong correlations are recorded under a growing resolution of sequencing. 
	For example, when we apply the model-X knockoff filter \citep{Candes2018} to the EADB-UKBB dataset \citep{Bellenguez2022} , only 162 AD-associated genetic variants from 38 loci (here, we define two loci as different if they are at least 1Mb away from each other) are identified under the target FDR level $0.10$ as shown in Table \ref{Tab:motivation}. 
	Compared with the 54 loci discovered by marginal association test ($p$-value threshold: $5\times 10^{-8}$) in \citet{Bellenguez2022}, although model-X knockoff filter identified 7 new loci, it misses several important AD loci with strong signals, including the ``ZCWPW1"  locus on chromosome 7, the ``PTK2B"  locus on chromosome 8, the ``ECHDC3"  locus  on chromosome 10, the "MS4A4E" locus on chromosome 11, the ``FERMT2"  locus on chromosome 14, the ``FAM157C"  locus on chromosome 16 and the ``SIGLEC11"  locus on chromosome 19. This can seen in Figures \ref{fig:motivating_cluster} (a)-(b) where we displays the difference of inference results between marginal association test and model-X knockoff filter within a high LD region between positions 31571218 $\sim$ 32682663,  chromosome 6. As correlation among genetic variants is high in this region, most of genetic variants possess a small $p$-values, which not only implies the existence of contributing variants but also obstacles our distinguishing of the contributing ones from all proxy ones. Such great correlations also result in no discoveries (under FDR $0.10$) of contributing variants of the model-X knockoff filter as the importance score of the most contributing variant does not exceed the data-driven threshold.
	
	\renewcommand{\arraystretch}{0.9}
	\begin{table}
		\centering\caption{Summary of results by applying different existing methods to the EADB-UKBB dataset, where bold values indicate the best performance under different evaluation metrics.}\label{Tab:motivation}
		\resizebox{\columnwidth}{!}{\begin{tabular}{lrrrr}
				\toprule
				\multirow{3}*{Method}&Number of &Average number of &Average&Average \\
				&identified&variants per & size of& purity$^\ast$ of\\
				&loci&identified locus&catching sets$^\star$&catching sets\\
				\midrule
				Marginal association test&54&17.093&17.093&0.489\\
				Model-X knockoff filter&38&\textbf{4.263}&1.000$^\dagger$&1.000$^\dagger$\\
				Group knockoff filter&\textbf{91}&19.923&9.157&0.651\\
				KnockoffZoom (2 layers)&\textbf{91}&15.152&5.621&0.851\\
				KnockoffZoom (3 layers)&\textbf{91}&14.681&5.409&0.858\\
				KeLP (2 layers)&45&10.422&2.535&0.918\\
				KeLP (3 layers)&38&10.605&\textbf{2.488}&\textbf{0.932}\\
				\bottomrule
				\multicolumn{5}{l}{\small $^\star$: Catching sets refer to }\\
				\multicolumn{5}{l}{\small (a) sets of variants within different loci identified by marginal association test;}\\
				\multicolumn{5}{l}{\small (b) variants identified by the model-X knockoff filter;}\\
				\multicolumn{5}{l}{\small (c) variant groups identified by the group knockoff filter.}\\
				\multicolumn{5}{l}{\small (d) resolution-adaptive discoveries made by KnockoffZoom and KeLP that can be either variants identified}\\
				\multicolumn{5}{l}{\small  \hspace{0.35cm} at the variant level or variant groups identified at the group level containing no identified variants.}\\
				\multicolumn{5}{l}{\small $^\ast$: Purity refers to the minimum absolute correlation within the catching set.}\\
				\multicolumn{5}{l}{\small $^\dagger$: Size and purity of catching sets provided by model-X knockoff filter are trivially 1 and thus not included}\\
				\multicolumn{5}{l}{\small \hspace{0.2cm} in comparison.}\\
		\end{tabular}}
	\end{table}

	\begin{figure}[t]
		\centering
		\begin{minipage}{\linewidth}
			\centering
			\includegraphics[width=0.39\linewidth]{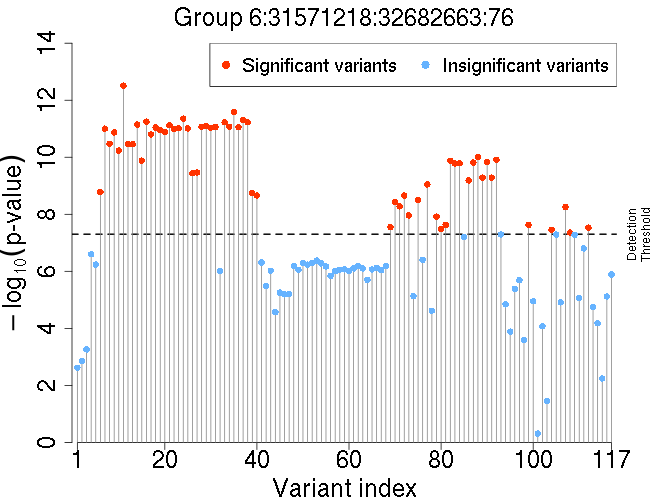}
			\includegraphics[width=0.5\linewidth]{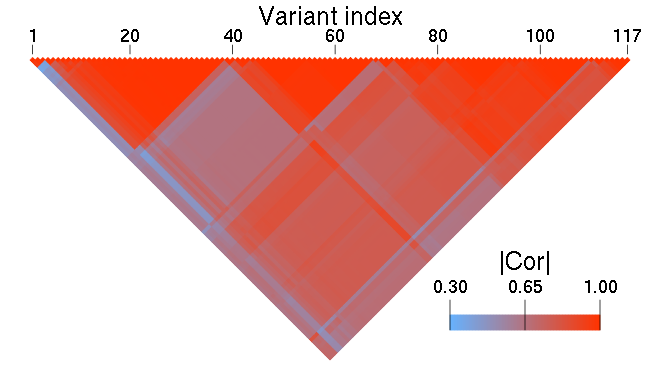}
			\par{(a) Manhattan plot of marginal association test and the correlation plot within the high-LD region.}
		\end{minipage}
		\begin{minipage}[t]{0.49\linewidth}
			\centering
			\includegraphics[width=0.78\linewidth]{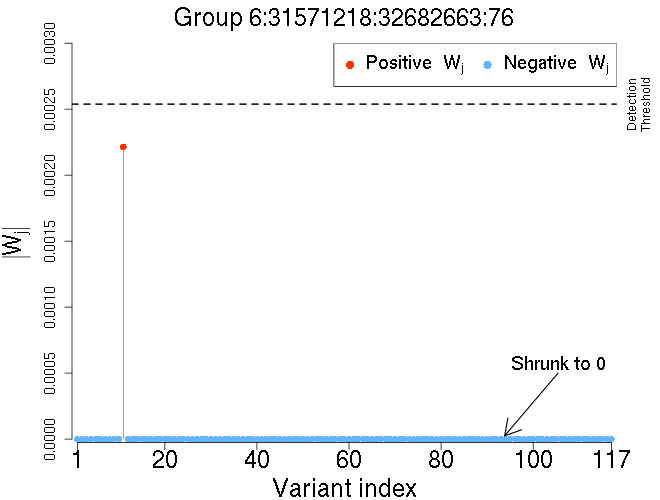}
			\par{{(b) Manhattan plot of feature importance statistics of \\different genetic variants under model-X knockoff filter.}}
		\end{minipage}
		\begin{minipage}[t]{0.49\linewidth}
			\centering
			\includegraphics[width=0.78\linewidth]{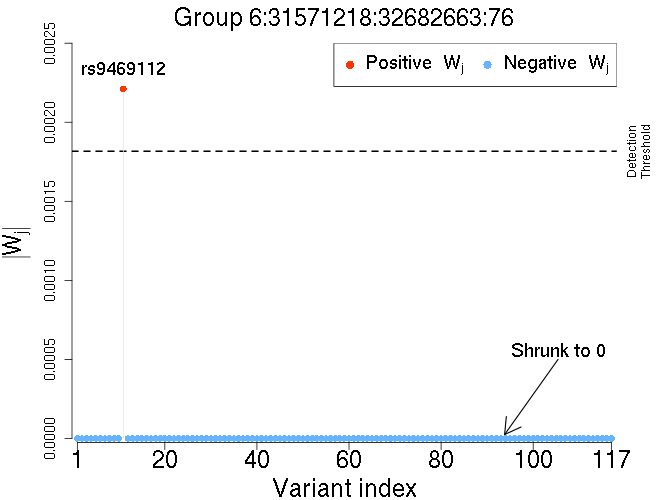}
			\par{{(c) Manhattan plot of the contribution of different variants to the importance statistics of the group under the group knockoff filter.}}
		\end{minipage}
		\caption{Comparison of inference results provided by marginal association test \citep{Bellenguez2022}, model-X knockoff filter \citep{Candes2018} and the group knockoff filter \citep{Dai2016} in a high-LD region between positions 31571218 $\sim$ 32682663,  chromosome 6. (Displaying order of genetic variants is reorganized according to the hierarchical clustering of $(1-|\text{cor}(G_i,G_{j})|)$ for better visualization.)}
		\label{fig:motivating_cluster}
	\end{figure} 
	
	To address this power loss issue in real-world genetic data analysis, people turn their inference target from the feature-versus-feature level to the group-versus-group level and use group knockoff filter to detect signals lie in high LD region. First introduced by \citet{Dai2016} and developed by multiple works \citep{Katsevich2019,Sesia2020,Sesia2021,Spector2022,Chu2024,Gablenz2024}, the group knockoff filter defines groups $B_1,\ldots,B_k$ such that {features with strong correlations are empirically allocated in the same group} and perform multiple testing on \begin{equation}
		\label{H_gg}
		H^{\text{(gg)}}_{k}:\textbf{X}_{B_k}\perp Y|\textbf{X}_{-B_k},\quad k=1,\ldots,K,
	\end{equation}
	at the group-versus-group level, where important features and their strongly correlated null features can be rejected as a group. Because the grouping is done to ensure features from different groups are weakly correlated, the group knockoff filter can circumvent the obstacle of high correlation in inferring $H^{(\text{ff})}_{j}$ with significantly higher power. To apply the group knockoff filter to the EADB-UKBB dataset \citep{Bellenguez2022}, we generate the second-order group knockoffs as follows by assuming that the distribution of $\textbf{X}$ can be well approximated by a multivariate Gaussian distribution. {To empirically construct the genetic variant groups according to the correlation, we first use the UK Biobank directly genotyped data (which also records genetic information of Europeans) as the reference panel and compute correlation $\text{cor}(G_i,G_j)$ between any pair of variants $G_i$ and $G_j$ via the Pan-UKB consortium (\url{https://pan.ukbb.broadinstitute.org}). We then perform the hierarchical clustering (average linkage with cutoff value $0.5$) on the distance matrix {\rm$(1-|\text{cor}(G_i,G_{j})|)_{p\times p}$} and construct $321,569$ variant groups. By doing so, the average absolute correlation of variants (features) from different groups is lower than 0.5.} With second-order group knockoffs generated under such a group structure and the maximum-entropy construction using Algorithm 2 of \citet{Chu2024}, the group knockoff filter identifies 198 AD-associated variant groups from 91 loci under the target FDR level $0.10$. Specifically, among these 198 identified groups, 121 are located outside the strongest APOE/APOC region (chromosome 19, positions 44909011 $\sim$ 45912650), including groups in loci ``ZCWPW1", ``PTK2B", ``ECHDC3", "MS4A4E", ``FERMT2", ``FAM157C"  and``SIGLEC11" that are missed by the model-X knockoff filter. In addition, the group knockoff filter also identifies {43} new loci that are not discovered by \citet{Bellenguez2022}.
	
	However, doing so can only provide a series of AD-associated variant groups without information on which variants contribute more to AD variation within each group, leaving the target of identifying AD-associated variants not fully addressed. This can be observed from Table \ref{Tab:motivation}, where each identified groups given by the group knockoff filter are treated as a catching set of genetic signals. With an average size $19.923$ of catching sets, 30 catching sets contain at least 10 genetic variants and the 5 largest sets contain 314, 232, 117, 84 and 57 variants respectively. With such large sizes, which genetic variants in these catching sets contribute more to AD variation remains unanswered. This can also be seen in Figure \ref{fig:motivating_cluster} (c). Within the high-LD region between positions 31571218 $\sim$ 32682663,  chromosome 6, all genetic variants are clustered in the same group and such a group is identified as a sole catching set. However, {such an inference result} loses the information that all of the importance of this variant group is contributed by the variant \texttt{rs9469112} as shown in Figure \ref{fig:motivating_cluster} (c).
	
	To improve the precision of catching sets of genetic signals, several approaches have been developed to return catching sets of multiple resolutions or in a resolution-adaptive manner. One of the pioneering method is the multilayer knockoff filter \citep{Katsevich2019} which simultaneously infers $H^{(\text{ff})}_{j}$'s and $H^{(\text{gg})}_{k}$'s, while the later developed KnockoffZoom\footnote{In this article, we adopt Algorithm 5 in the supplementary note 4 of \citet{Sesia2020}.} \citep{Sesia2020} and knockoff e-value linear program (KeLP; \citealp{Gablenz2024}) are implemented to identify genetic signals in resolution-adaptive manners by using catching sets at different resolutions. Specifically, KnockoffZoom gradually refines catching sets by sequentially performing group knockoff inference from lower resolutions to higher resolutions and finally using model-X knockoff filter at the feature-versus-feature level. KeLP, in contrast, utilizes the method of \citet{Ren2024} to compute $e$-values of all variant groups at different resolutions and all variants and identifies signals at the finest possible resolution by solving an optimization problem with nesting constraints.

	However, when we apply KnockoffZoom and KeLP to identify contributing genetic variants under the target FDR level $0.10$  {at both the feature-versus-feature level as in the model-X knockoff filter and group-versus-group level as in the group knockoff filter} (which we label as ``(2 layers)"), existing issues are not fully addressed. On one hand, as the inference of KnockoffZoom (2 layers) at the group-versus-group level is the same as the group knockoff filter, KnockoffZoom (2 layers) identifies the same loci as the group knockoff filter and does not {suffer from any power loss}. However, as signal-to-noise ratios of $H^{(\text{ff})}_{j}$'s are low due to high LD, half of groups (99/198) contain no identified variants at the feature-versus-feature level. Thus, KnockoffZoom (2 layers) only records moderate improvement of average size (decreases from 9.157 to 5.621) and purity (increases from 0.651 to 0.851) of catching sets. This can be seen from Figure \ref{fig:motivating_cluster2} (a) in Appendix \ref{Supp_fig}, where importance score of the most contributing variant in the high-LD region between positions 31571218 $\sim$ 32682663,  chromosome 6, remains lower than the data-driven threshold. Even when we add an intermediate layer (obtained by the hierarchical clustering with average linkage and cutoff value $0.25$) between the current two levels in applying KnockoffZoom (we refer to as ``KnockoffZoom (3 layers)"), such a phenomenon retains
	as shown in Figures \ref{fig:motivating_cluster2} (b)-(c). KeLP, on the other hand, {suffer from greater power loss} even compared to the model-X knockoff filter, no matter with 2 layers or 3 layers. The main reason is that computing e-values for resolution-adaptive inference under the target FDR level $0.10$ requires inferences under a target FDR level lower than $0.10$ throughout all resolutions. As only variants and groups with nonzero e-values can be selected by KeLP, lower target FDR levels would decrease the number of variants and groups with nonzero e-values and thus the number of catching sets.
	
	{Another category of commonly used approaches are various Bayesian fine-mapping approaches \citep{Wilson2010,Guan2011,Hormozdiari2014,Chen2015,Benner2016,Wang2020,Zou2022,Li2024a}. For comprehensive review, please refer to \citet{Schaid2018} and \citet{Li2025}. Generally, existing Bayesian fine-mapping approaches fit a parametric model between the response and features (genetic variants) in identified important regions, compute the posterior inclusion probability (PIP) of each feature to be contributive via Bayesian approaches and construct credible sets of features to contain at least one signal with large posterior probability. Potentially, one can perform fine-mapping within variant groups identified by the group knockoff filter and obtain credible sets of smaller size and higher precision. However, using such a two-step heuristic approach still faces the tradeoff between power and precision via the selection of coverage probability threshold.}
	
	%
	%
	%

	\subsection{Our Contribution}\label{Contribution}
	
	In this article, to address the limitations of existing knockoff methods in discovering important variants of AD in the analysis of the EADB-UKBB dataset, we propose a new knockoff filter that {achieves balance between the statistical power and precision} in identifying important features from a set of strongly correlated features. Built upon group knockoffs, {our filter simultaneously selects groups containing important features and performs fine-mapping to identify the most promising features within each selected group. Thus, our method can return refined catching sets that are small in size and high in purity without large power loss.} Specifically, our approach of refinement is motivated by penalized regressions that can efficiently learn the sparse nature of causality via the sparse feature importance, which is leveraged in searching of catching sets under alternative variant-level hypotheses.

	The rest of this article is organized as follows. In Section \ref{Methodology}, we develop our new knockoff filter and perform theoretical analysis. 
	Section \ref{Simulation} illustrates the empirical performance of our method and its advantage over existing alternatives in power and {precision} via simulated experiments. Finally, we apply our method to the AD analysis of the EADB-UKBB dataset in Section \ref{Real} and demonstrate that the method exhibits balance between power and precision. Section \ref{Discussions} concludes with discussions.	
	
	\section{Methodology}\label{Methodology}
	
	\subsection{Problem Statement}\label{problem}
	
	Consider independent and identically distributed (i.i.d.) samples $\{(\textbf{x}_i,y_i)|i=1,\ldots,n\}$ from a joint distribution $f(\mathbf{ X},Y )$, where features $\textbf{X}=(X_1,\ldots,X_p)^\dT$ are partitioned into $K$ disjoint groups $B_1,\ldots,B_K$ {for inference. For example, in analysis of genetic data, features (variants) with high correlation are usually allocated to the same group such that features from different groups are weakly or moderately correlated.} Let $\mathbb{X}=[\textbf{x}_1,\ldots,\textbf{x}_n]^\dT$ denote the data matrix and $\textbf{y}=(y_1,\ldots,y_n)^\dT$ be the vector of responses. Our target is to perform multiple testing on conditional independence hypotheses,
	\begin{equation}\label{H_ig}
		H_{j}^{(\text{fg})}:X_{j}\perp Y|\textbf{X}_{-B_k},\quad k=1,\ldots,K,\quad j\in B_k,
	\end{equation}
	where $\textbf{X}_{-B_k}=(\textbf{X}^\dT_{B_1},\ldots,\textbf{X}^\dT_{B_{k-1}},\textbf{X}^\dT_{B_{k+1}},\ldots,\textbf{X}^\dT_{B_K})^\dT$ is the subvector of $\textbf{X}$ created by excluding all features in the same group $B_k$ of $X_j$. Here, we refer $H_{j}^{(\text{fg})}$ as the conditional independence hypothesis at the feature-versus-group level as it depicts the independence between a feature and the response conditional on other feature groups which omits information of other features in the same group.  Under the belief that the response $Y$ only depends on a relatively small number of features as validated in many GWAS studies, we would like to find as many important features as possible without making too many false discoveries with respect to $H_{j}^{(\text{fg})}$'s. In other words, our target is to obtain a rejection set $\mathcal{R}^{(\text{fg})}=\{j|H_{j}^{(\text{fg})}\text{ is rejected}\}$ such that the false discovery rate (FDR, the expectation of false discovery proportion),
	\begin{equation}
		\label{FDR_H_ig}
		\text{FDR}^{(\text{fg})}=\mathbf{E}\{\text{FDP}^{(\text{fg})}\},\quad \text{where }\text{FDP}^{(\text{fg})}=\frac{\#(\mathcal{R}^{(\text{fg})}\cap \mathcal{H}_0^{(\text{fg})})}{1\vee\#\mathcal{R}^{(\text{fg})}},\quad \mathcal{H}_0^{(\text{fg})}=\{j|H_{j}^{(\text{fg})}\text{ is true}\},
	\end{equation}
	is controlled under the target level $\alpha$ ($0<\alpha<1$).
	
	Our method is under the same setup of \citet{Candes2018} that the distribution of features $\textbf{X}$ is known completely or well approximated up to the first two moments, while no information is provided regarding the conditional distribution $Y|\textbf{X}$. {That is to say, our framework does not assume that the contribution of $X_j$'s to $Y$ are additive (independent) and the FDR control remains valid even the underlying conditional distribution $Y|\textbf{X}$ is deviated from the linear model.} {Our method differs from existing approaches, which test either \eqref{H_ii} at the feature-versus-feature level or \eqref{H_gg} at the group-versus-group level, as follows. 
		\begin{itemize}
			\item Compared to those that infer $H^{(\text{ff})}_{j}$'s, our method can utilize group knockoffs that are more liberal in construction to achieve higher power in identifying important features. 
			\item It is true that in most of GWAS studies where highly correlated features are allocated in the same groups, our hypotheses $H_{j}^{(\text{fg})}$'s are equivalent to $H_{k}^{(\text{gg})}$'s at the group-versus-group level for $j\in B_k$. That is to say, including feature $X_j$ in $\mathcal{R}^{(\text{fg})}$ implies that the group $B_k$ contains important features. The difference is that our method simultaneously selects groups containing important features and performs fine-mapping to identify the most promising features within each selected group. Thus, instead of returning variant groups with many variants as large and impure catching sets, our method can further refine the selected variant groups. In addition, our method can achieve better balance between power and precision compared to the two-step heuristic alternative -- applying post-selection fine-mapping to groups selected by the group knockoff filter.
		\end{itemize}
	}
	
	However, such a gain does not come free as new challenge is posted that no existing filters can provide control on $\text{FDR}^{(\text{fg})}$, which is elaborated and addressed in the following sections. 
	
	\subsection{Feature Importance Scores Under Group Knockoffs}\label{score}
	
	To infer conditional independence hypotheses \eqref{H_ig} with group knockoffs, our inference procedure, similar to the existing ones, include four main steps as follows.
	
	\begin{itemize}
		\item[$\diamond$] \textbf{(Knockoffs Construction):} Construct the knockoff copy
		$\widetilde{\textbf{X}}$ of $\textbf{X}$ such that $\widetilde{\textbf{X}}\perp Y|\textbf{X}$ and the joint distribution of $(\textbf{X},\widetilde{\textbf{X}})$ is exchangeable at the group level,
		\begin{equation}\label{group_exchangeability}
			\begin{aligned}[b]
				(\textbf{X},\widetilde{\textbf{X}})_{\text{swap}(\mathcal{S})}{\displaystyle\mathop{=\joinrel=}^{D}}(\textbf{X},\widetilde{\textbf{X}}),\quad \forall \mathcal{S}\subset\{1,\ldots,K\},
			\end{aligned}
		\end{equation}
		where {\rm$(\textbf{X},\widetilde{\textbf{X}})_{\text{swap}(\mathcal{S})}$} is obtained by swapping {\rm$\textbf{X}_{B_k}$} and {\rm$\widetilde{\textbf{X}}_{B_k}$} in {\rm$(\textbf{X},\widetilde{\textbf{X}})$} for all {\rm$k\in \mathcal{S}$}.
		\item[$\diamond$] \textbf{(Feature Importance Scores Calculation):} For each $H^{(\text{fg})}_{j}$, compute feature importance scores $T_j$ for feature $X_j$ and $\widetilde{T}_j$ for its knockoff copy $\widetilde{X}_j$.
		\item[$\diamond$] \textbf{(Feature Statistics Calculation):} For each $H^{(\text{fg})}_{j}$, summarize importance scores $T_j$ and $\widetilde{T}_j$ as a feature statistic $W_j=w_j(T_j,\widetilde{T}_j)$ using an anti-symmetric function $w_j$ such that  $w_j(T_j,\widetilde{T}_j)=-w_j(\widetilde{T}_j,T_j)$.
		\item[$\diamond$] \textbf{(Feature Filtering):} Reject $H^{(\text{fg})}_{j}$ if $W_j$ is positive and large enough (i.e., $T_j$ dominates $\widetilde{T}_j$).
	\end{itemize}
	
	However, directly inputting $W_1,\ldots,W_p$ to the model-X knockoff filter at feature level \citep{Candes2018} cannot provide control on $\text{FDR}^{(\text{fg})}$ of the rejection set. The main reason is that under the group knockoffs model $f(\textbf{X},\widetilde{\textbf{X}})$, $W_1,\ldots,W_p$ do not possess the coin-flipping property (i.e., conditional on $|W_j|$'s and $\text{sign}(W_j)$'s for those false $H^{(\text{fg})}_{j}$'s, $\text{sign}(W_j)$'s for those true $H^{(\text{fg})}_{j}$'s are independently and uniformly distributed on $\{\pm1\}$), the key property that guarantees FDR control in \citet{Candes2018}. Specifically, for most of the widely-used feature importance scores (e.g., the marginal correlations, the Lasso coefficient difference statistics and the Lasso signed max statistics; \citealp{Barber2015}, \citealp{Candes2018}), $W_j$'s for those true $H^{(\text{fg})}_{j}$'s are dependent coin flips. This comes from \eqref{group_exchangeability} where exchangeability only exists among groups, making signs of $W_{j}$'s from the same group not independent.
	
	Thus, we relax the unachievable coin-flipping property to the between-group coin-flipping property and propose feature statistics $W_{j}$'s using anti-symmetric functions $w_j$'s  and
	\begin{equation}\label{ig_type_two}\begin{aligned}
			&\begin{cases}
				T_j=t([\mathbb{X}_j,\mathbb{X}_{B_k\setminus\{j\}},\widetilde{\mathbb{X}}_j,\widetilde{\mathbb{X}}_{B_k\setminus\{j\}},\{\mathbb{X}_{-B_k},\widetilde{\mathbb{X}}_{-B_k}\}],\textbf{y}),\\
				\widetilde{T}_j=t([\widetilde{\mathbb{X}}_j,\widetilde{\mathbb{X}}_{B_k\setminus\{j\}},{\mathbb{X}}_j,{\mathbb{X}}_{B_k\setminus\{j\}},\{\mathbb{X}_{-B_k},\widetilde{\mathbb{X}}_{-B_k}\}],\textbf{y}),
			\end{cases}\quad k=1,\ldots,K\text{, }j\in B_k,\\
			&\text{where }\{\mathbb{X}_{-B_k},\widetilde{\mathbb{X}}_{-B_k}\}\text{ denotes the set of unordered pairs }\{\mathbb{X}_{j'},\widetilde{\mathbb{X}}_{j'}\}\text{ for all }j'\notin B_k.
	\end{aligned}\end{equation}
	{Here, the swapping between $T_j$ and $\widetilde{T}_j$ requires the swapping between $\mathbb{X}_{B_k}$ and $\widetilde{\mathbb{X}}_{B_k}$, while it only requires the swapping between $\mathbb{X}_{j}$ and $\widetilde{\mathbb{X}}_{j}$ in the model-X knockoff filter. However, most of commonly used feature importance scores in the model-X knockoff filter also satisfy (\ref{ig_type_two}). Examples include:} 
	\begin{itemize}
		\item[$\diamond$] \textbf{marginal correlation with response:} absolute values of the empirical marginal correlations
		$T_j=\Big|\widehat{\rho}(X_j,Y)\Big|$ and $\widetilde{T}_j=\Big|\widehat{\rho}(\widetilde{X}_j,Y)\Big|$; 
		\item[$\diamond$] \textbf{joint lasso:} absolute values of lasso estimators $T_j=\Big|\widehat{\beta}_j\Big|$ and $\widetilde{T}_j=\Big|\widehat{\widetilde{\beta}}_j\Big|$ of the linear model \citep{Tibshirani1996}
		\begin{equation}\label{Joint_Lasso}Y=\sum_{j=1}^p\left({\beta}_{j}{{X}}_{j}+\widetilde{\beta}_{j}\widetilde{{X}}_{j}\right)+e;
		\end{equation}
		\item[$\diamond$] \textbf{marginal correlation with lasso residual:} absolute values of estimated marginal correlations $T_j=\Big|\widehat{\rho}(X_j,\hat{e})\Big|$ and $\widetilde{T}_j=\Big|\widehat{\rho}(\widetilde{X}_j,\hat{e})\Big|$ between  $(X_j,\widetilde{X}_j)^\dT$ and the lasso residual of the linear model 
		$$
		Y=\sum_{j^\dagger\notin B_k}\left({\beta}_{j^\dagger}{{X}}_{j^\dagger}+\widetilde{\beta}_{j^\dagger}\widetilde{{X}}_{j^\dagger}\right)+e,\quad \text{where }j\in B_k;
		$$
		\item[$\diamond$] \textbf{separate lasso:} absolute values of lasso estimators $T_j=\Big|\widehat{\beta}_j\Big|$ and $\widetilde{T}_j=\Big|\widehat{\widetilde{\beta}}_j\Big|$ of the separate linear model 
		$$Y={\beta}_{j}{{X}}_{j}+\widetilde{\beta}_{j}\widetilde{{X}}_{j}+\sum_{j^\dagger\notin B_k}\left({\beta}_{j^\dagger}{{X}}_{j^\dagger}+\widetilde{\beta}_{j^\dagger}\widetilde{{X}}_{j^\dagger}\right)+e,\quad \text{where }j\in B_k.
		$$
	\end{itemize}
	In our practice, we suggest using joint lasso importance score because it is computationally efficient and induces sparsity for better refinement of catching sets as stated in Section \ref{problem}.

	As shown in Theorem \ref{thm:uniform_two}, feature importance scores in the form of (\ref{ig_type_two}) satisfy the between-group coin-flipping property.
	\begin{thm}\label{thm:uniform_two}
		For any feature importance scores $T_j$ and $\widetilde{T}_j$ in the form of \eqref{ig_type_two}, feature statistics $W_j=w_j(T_j,\widetilde{T}_j)$ ($j=1,\ldots,p$) with antisymmetric functions $w_j$'s satisfy the between-group coin-flipping property that
		\begin{enumerate}
			\item[$\star$] conditional on $|W_1|,\ldots,|W_p|$ and $\text{\rm sign}(W_{j})$'s for those false $H^{(\text{fg})}_{j}$'s,
			\begin{enumerate}
				\item[$\diamond$] {\rm \textbf{(Uniformity)}} $\text{\rm sign}(W_{j})$ is uniformly distributed on $\{\pm1\}$  for all {\rm$j\in\mathcal{H}_0^{(\text{fg})}$};
				\item[$\diamond$] {\rm \textbf{(Between-group independence)}}
				for any $k\neq k^\dagger$, {\rm$\text{\rm sign}(W_{j})$} and {\rm$\text{\rm sign}(W_{j^\dagger})$} are independent for any $j\in B_k\cap \mathcal{H}_0^{(\text{fg})}$ and $j^\dagger\in B_{k^\dagger}\cap \mathcal{H}_0^{(\text{fg})}$,
			\end{enumerate}
		\end{enumerate}
		if {\rm$H_j^{(\text{fg})}$} implies {\rm$H_k^{\text{(gg)}}$} for all $j\in B_k$ and $k=1,\ldots,K$.
	\end{thm}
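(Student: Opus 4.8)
\noindent\emph{Proof proposal.} The plan is to run the symmetry scheme behind the model-X argument of \citet{Candes2018}, but to track how each $W_j$ transforms under swaps of \emph{whole groups} rather than of individual features. The special form \eqref{ig_type_two} is designed precisely so that a group swap acts cleanly on the $W_j$'s, and the ``unordered-pair'' input of the out-of-group features is what will decouple distinct groups and deliver between-group independence.

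\noindent\textbf{Step 1 (nulls are unions of null groups).} I would first use the standing assumption that $H_j^{(\text{fg})}$ implies $H_k^{(\text{gg})}$ for every $j\in B_k$, together with the trivial converse that $H_k^{(\text{gg})}$ implies $H_j^{(\text{fg})}$ for every $j\in B_k$, to conclude that each group is homogeneous: $B_k$ contains a feature-versus-group null if and only if every feature of $B_k$ is one, if and only if $H_k^{(\text{gg})}$ holds. Hence $\mathcal{H}_0^{(\text{fg})}$ is a union of entire ``null groups,'' and flipping any such group will flip the signs of null features only, leaving the conditioned non-null signs intact.

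\noindent\textbf{Step 2 (conditional group exchangeability of $(\textbf{X},\widetilde{\textbf{X}},Y)$).} For any set $\mathcal{S}$ of null groups I would show $(\textbf{X},\widetilde{\textbf{X}},Y)_{\text{swap}(\mathcal{S})}\stackrel{D}{=}(\textbf{X},\widetilde{\textbf{X}},Y)$. Factor the joint density as $f(\textbf{X},\widetilde{\textbf{X}})\,f(Y\mid\textbf{X})$, using $\widetilde{\textbf{X}}\perp Y\mid\textbf{X}$. The first factor is swap-invariant by the group exchangeability \eqref{group_exchangeability}, which holds for \emph{every} $\mathcal{S}$. For the second factor, chaining the group-versus-group independences $Y\perp\textbf{X}_{B_k}\mid\textbf{X}_{-B_k}$ over $k\in\mathcal{S}$ (all null by Step 1) should give $f(Y\mid\textbf{X})=f(Y\mid\textbf{X}_{-B_{\mathcal{S}}})$ with $B_{\mathcal{S}}=\cup_{k\in\mathcal{S}}B_k$; since the swap leaves $\textbf{X}_{-B_{\mathcal{S}}}$ untouched, this factor is swap-invariant as well. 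This chaining is the step I expect to require the most care, since $Y\perp\textbf{X}_{B_{k_1}}\mid\textbf{X}_{-B_{k_1}}$ and $Y\perp\textbf{X}_{B_{k_2}}\mid\textbf{X}_{-B_{k_2}}$ must be combined into a single joint statement by a density-level argument and then iterated over $\mathcal{S}$.

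\noindent\textbf{Step 3 (action on the statistics) and Step 4 (symmetry to independence).} Using \eqref{ig_type_two}, I would show that a swap of $\mathcal{S}$ acts multiplicatively, sending $W_j\mapsto -W_j$ when $j$ lies in a swapped group and $W_j\mapsto W_j$ otherwise: the first case holds because swapping $B_k$ interchanges $T_j$ and $\widetilde{T}_j$ while $w_j$ is antisymmetric, and the second, crucially, because the out-of-group features enter $T_j$ only through the unordered pairs $\{\mathbb{X}_{j'},\widetilde{\mathbb{X}}_{j'}\}$, which are fixed under swapping any other group. Thus $|W_j|$ is always preserved and signs of features in distinct groups are flipped by \emph{independent} swaps. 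Combining Steps 2 and 3, the law of $(W_1,\ldots,W_p)$ is invariant under independently negating the $W$-block of any subset of null groups, with $|W_1|,\dots,|W_p|$ and the non-null signs held fixed; hence the conditional law of the null signs is invariant under these independent block negations. Specializing to a pair $j\in B_k$, $j^\dagger\in B_{k^\dagger}$ with $k\neq k^\dagger$, a flip of $B_k$ (resp.\ $B_{k^\dagger}$) negates only $\text{sign}(W_j)$ (resp.\ only $\text{sign}(W_{j^\dagger})$), so invariance under these two independent single-coordinate flips forces the joint law of $(\text{sign}(W_j),\text{sign}(W_{j^\dagger}))$ to be uniform on $\{\pm1\}^2$, yielding both Uniformity and Between-group independence (the measure-zero event $W_j=0$ handled by the usual continuity assumption or sign convention). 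The conceptual crux is the observation in Step 3 that the unordered-pair encoding makes out-of-group swaps act trivially, so that the group swaps of Step 2 act as independent sign flips; the remaining symmetry-to-independence deduction is then routine.
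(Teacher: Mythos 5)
Your proposal is correct and follows essentially the same route as the paper's proof: both arguments combine the group exchangeability \eqref{group_exchangeability} with the null hypotheses (and the homogeneity of nulls within groups) to show that swapping a null group leaves the joint law of $(\textbf{X},\widetilde{\textbf{X}},Y)$ unchanged, and then use the structure of \eqref{ig_type_two} --- in-group swaps exchange $T_j$ and $\widetilde{T}_j$ while out-of-group swaps act trivially through the unordered pairs --- to convert that invariance into uniformity and between-group independence of the null signs. One remark: the chaining you flag as delicate in Step 2 would indeed require an intersection-type property of conditional independence (hence positivity assumptions), but it is avoidable --- since each single-group swap individually preserves the joint law, invariance under a simultaneous swap of several null groups follows simply by composing the single-group swaps, which is all the pairwise independence claim needs and is effectively how the paper proceeds via the four-way conditional exchangeability of the two blocks.
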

	The proof of Theorem \ref{thm:uniform_two} is provided in Appendix \ref{pr:uniform_two}. Specifically, Theorem \ref{thm:uniform_two} relies on a condition ``{\rm$H_j^{(\text{fg})}$} implies {\rm$H_k^{\text{(gg)}}$} for all $j\in B_k$ and $k=1,\ldots,K$". {That is to say, if {\rm$H_k^{\text{(gg)}}$}  is false and group $B_k$ is important, all {\rm$H_j^{(\text{fg})}$}'s ($j\in B_k$) are false and any $j\in B_k$ included in the selection sets would not be counted as false discovery.} Such a condition is generally valid in genetic data analysis, analogous to the usual expectation in fine-mapping that the association between an important variant $X_j$ and the response $Y$ is in the same direction as the direct effect of $X_j$ on $Y$.

	\subsection{Naive Feature Filter under Group Knockoffs}\label{filter}
	
	By Theorem \ref{thm:uniform_two}, feature statistics $W_{j}$'s only satisfy the between-group coin-flipping property and thus directly applying the model-X knockoff filter \citep{Candes2018} would violate the FDR control. However, we can still utilize the between-group independence among $W_{j}$'s for inference with FDR control as follows.

	Recognizing that independence remains true among $\text{sign}(W_j)$'s for those true $H^{(\text{fg})}_{j}$'s from different groups, we consider aligning $W_j$'s in Table \ref{tab:align} such that $W_j$'s for all features in the group $B_k$ are in the $k$-th column and $|W_{(k1)}|\geq |W_{(k2)}|\geq \cdots$. Let $\mathcal{C}_l$ be the set of features whose feature statistics are aligned in the $l$-th row and consider the rejection set in the form of $\mathcal{R}^{(\text{fg})}(t)=\{j|W_j\geq t\}$.
	
	\renewcommand{\arraystretch}{1}
	\begin{table}[h]
		\centering
		\caption{Alignment of feature statistics $W_{j}$'s such that different groups correspond to different columns. Here, $W_{(kl)}$ corresponds to the feature from the $k$-th group that is aligned in the $l$-th row.}
		\label{tab:align}
		\begin{tabular}{ccccc}
			\toprule
			Row&$B_1$&$B_2$&$B_3$&$\cdots$\\
			\midrule
			1&$W_{(11)}$&$W_{(21)}$&$W_{(31)}$&$\cdots$\\
			\hdashline
			2&$W_{(12)}$&$W_{(22)}$&$W_{(32)}$&$\cdots$\\
			\hdashline
			3&$W_{(13)}$&$W_{(23)}$&$W_{(33)}$&$\cdots$\\
			\hdashline
			4&$W_{(14)}$&$W_{(24)}$&$W_{(34)}$&$\cdots$\\
			\hdashline
			$\vdots$&$\vdots$&$\vdots$&$\vdots$&$\ddots$\\
			\bottomrule
		\end{tabular}
	\end{table}
	
	\begin{itemize}
		\item On one hand, we find that within each row, there exists at most one $W_j$ from each group. As feature statistics $W_{j}$'s satisfy the between-group coin-flipping property, we have that $W_j$'s within the same row satisfy the coin-flipping property introduced in \citet{Candes2018}. Thus, we can estimate the false discovery proportion (FDP) of the rejection subset $\mathcal{R}^{(l)}(t)=\{j\in \mathcal{C}_l|W_j\geq t\}$ as
		\begin{equation}\label{G_FDR_naive}
			\widehat{\text{FDP}}^{(l)}(t)=\frac{1+\#\{j\in \mathcal{C}_l|W_j\leq -t\}}{1\vee\#\{j\in \mathcal{C}_l|W_j\geq t\}},
		\end{equation}
		for any $t>0$ and $l=1,2\ldots$ in the analogous way of \citet{Candes2018}. 
		\item On the other hand, we have $\text{FDP}^{(\text{fg})}(t)$ of $\mathcal{R}^{(\text{fg})}(t)$ is a weighted mean of $\text{FDP}^{(l)}(t)$'s of subsets $\mathcal{R}^{(l)}(t)=\mathcal{R}^{(\text{fg})}(t)\cap\mathcal{C}_l$ ($l=1,2,\ldots$) that 
		\begin{align*}
			\text{FDP}^{(\text{fg})}(t)
			=&\frac{\#(\mathcal{R}^{(\text{fg})}(t)\cap \mathcal{H}_0^{(\text{fg})})}{1\vee\#\mathcal{R}^{(\text{fg})}(t)}\\
			=&\sum_{l=1}^{\phi(t)}\frac{\#(\mathcal{R}^{(l)}(t)\cap \mathcal{H}_0^{(\text{fg})})}{1\vee\#\mathcal{R}^{(\text{fg})}(t)}\\=&\sum_{l=1}^{\phi(t)}\frac{1\vee\#\mathcal{R}^{(l)}(t)}{1\vee\#\mathcal{R}^{(\text{fg})}(t)}\times \frac{\#(\mathcal{R}^{(l)}(t)\cap \mathcal{H}_0^{(\text{fg})})}{1\vee\#\mathcal{R}^{(l)}(t)}\\
			=&\sum_{l=1}^{\phi(t)}\frac{1\vee\#\mathcal{R}^{(l)}(t)}{1\vee\#\mathcal{R}^{(\text{fg})}(t)}\times \text{FDP}^{(l)}(t),
		\end{align*}
		where the second equality comes from the fact that rejection subsets $\mathcal{R}^{(l)}(t)$ can be nonempty only for the first $\phi(t)=\max_k\#\{j\in B_k||W_j|\geq t\}$ rows because $|W_{(k1)}|\geq |W_{(k2)}|\geq \cdots$ in each group $B_k$.
	\end{itemize}

	As a result, using $\widehat{\text{FDP}}^{(l)}(t)$'s in (\ref{G_FDR_naive}), we can estimate $\text{FDP}^{(\text{fg})}(t)$ of $\mathcal{R}^{(\text{fg})}(t)$ by  
	\begin{align*}
		\widehat{\text{FDP}}^{(\text{fg})}(t)&=\sum_{l=1}^{\phi(t)}\frac{1\vee\#\mathcal{R}^{(l)}(t)}{1\vee\#\mathcal{R}^{(\text{fg})}(t)}\times \widehat{\text{FDP}}^{(l)}(t)\\&=\sum_{l=1}^{\phi(t)}\frac{1\vee\#\{j\in \mathcal{C}_l|W_j\geq t\}}{1\vee\#\{j|W_{j}\geq t \}}\times \frac{1+\#\{j\in \mathcal{C}_l|W_j\leq -t\}}{1\vee\#\{j\in \mathcal{C}_l|W_j\geq t\}}\\
		&=\frac{{\phi(t)}+\#\{j|W_{j}\leq -t \}}{1\vee\#\{j|W_{j}\geq t \}},
	\end{align*}
	leading to the naive filter (Algorithm \ref{alg:G-FDR}).
	
	\begin{algorithm}
		\caption{Naive feature filter with group knockoffs.}\label{alg:G-FDR}
		\begin{algorithmic}[1]
			\STATE  \textbf{Input:} Groups $B_1,\ldots,B_K$, feature statistics $W_1,\ldots,W_p$ and the target level $\alpha>0$.
			\STATE Compute
			{\begin{equation}\label{t_alpha1}
					t_\alpha=\min\Biggl\{t>0\Bigg|\widehat{\text{FDP}}^{(\text{fg})}(t)=\frac{{\phi(t)}+\#\{j|W_{j}\leq -t \}}{1\vee\#\{j|W_{j}\geq t \}}\leq \alpha\Biggr\},
			\end{equation}}
			\text{where }$\phi(t)=\max_k\#\{j\in B_k||W_j|\geq t\}$.
			\STATE \textbf{Output:} The rejection set $\mathcal{R}^{(\text{fg})}=\{j|W_j\geq t_\alpha\}$.
		\end{algorithmic}
	\end{algorithm}

	\subsection{FVG Filter}\label{filter2}
	
	Although the naive filter seems intuitive in FDR control, such a control can not be strictly proven. To fill the gap, we provide an alternative interpretation of Algorithm \ref{alg:G-FDR} and consider a variation that provides provable FDR control as follows.
	
	With the target to reject as many false $H_j^{(\text{fg})}$'s as possible while $\text{FDR}^{(\text{fg})}$ is under control, multiple testing of $H_j^{(\text{fg})}$'s can be transferred into an optimization problem. Based on the construction of feature statistics that $W_j$'s for those false $H_j^{(\text{fg})}$'s tend to be positive and large, Algorithm \ref{alg:G-FDR} solves the following optimization problem 
	$$\begin{aligned}
		\max |\mathcal{R}^{(\text{fg})}|,\quad
		\text{s.t. }& \widehat{\text{FDP}}^{(\text{fg})}=\sum_{l}\text{I}\left(\max_{j\in \mathcal{C}_l}|W_{j}|\geq t^{(l)}\right)\times\frac{1+\#\{j\in \mathcal{C}_l|W_{j}\leq -t^{(l)} \}}{1\vee[\sum_{l}\#\{j\in \mathcal{C}_l|W_{j}\geq t^{(l)} \}]}\leq \alpha,\\
		& \mathcal{R}^{(\text{fg})}=\cup_{l}\mathcal{R}^{(l)}=\cup_{l}\{j\in \mathcal{C}_l|W_j\geq t^{(l)}\},\quad t^{(1)}=t^{(2)}=\cdots.
	\end{aligned}$$
	Based on such an observation, we utilize theoretical results of \citet{Katsevich2019} that for any $t>0$, 
	$$\textbf{E}\left[\sup_{t>0}\frac{\#\{j\in \mathcal{H}_0^{(\text{fg})}\cap \mathcal{C}_l|W_{j}\geq t\}}{1+\#\{j\in\mathcal{H}_0^{(\text{fg})}\cap\mathcal{C}_l|W_{j}\leq -t\}}\right]\leq 1.93,$$ to modify the above optimization problem as 
	\begin{equation}\label{opt}
		\begin{aligned}[b]
			\max |\mathcal{R}^{(\text{fg})}|,\quad
			\text{s.t. }& \text{I}\left(\max_{j\in \mathcal{C}_l}|W_{j}|\geq t^{(l)}\right)\times\frac{1+\#\{j\in \mathcal{C}_l|W_{j}\leq -t^{(l)} \}}{1\vee[\sum_{l}\#\{j\in \mathcal{C}_l|W_{j}\geq t^{(l)} \}]}\leq \frac{v_l\alpha}{1.93},\quad l=1,2,\ldots,\\
			& \mathcal{R}^{(\text{fg})}=\cup_{l}\mathcal{R}^{(l)}=\cup_{l}\{j\in \mathcal{C}_l|W_j\geq t^{(l)}\},
		\end{aligned}
	\end{equation}
	where $v_l$'s are pre-calculated budgets for contribution of $\mathcal{R}^{(l)}$'s to $\widehat{\text{FDP}}^{(\text{fg})}$. By doing so, the rejection set {\rm $\mathcal{R}^{(\text{fg})}$} obtained by solving \eqref{opt} has controlled {\rm $\text{FDR}^{(\text{fg})}$} as shown in Theorem \ref{thm:FDR}.
	\begin{thm}\label{thm:FDR}
		The rejection set {\rm $\mathcal{R}^{(\text{fg})}$} obtained by solving \eqref{opt} controls {\rm $\text{FDR}^{(\text{fg})}$} at the target level  $\alpha>0$.
	\end{thm}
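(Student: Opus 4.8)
The plan is to control $\text{FDP}^{(\text{fg})}$ row by row along the alignment of Table \ref{tab:align}, and then aggregate using the martingale bound of \citet{Katsevich2019} together with the per-row constraints of \eqref{opt}. Writing $V_l^{+}(t)=\#\{j\in\mathcal{C}_l\cap\mathcal{H}_0^{(\text{fg})}\mid W_j\geq t\}$, $V_l^{-}(t)=\#\{j\in\mathcal{C}_l\cap\mathcal{H}_0^{(\text{fg})}\mid W_j\leq -t\}$, and $R=\#\mathcal{R}^{(\text{fg})}=\sum_l\#\{j\in\mathcal{C}_l\mid W_j\geq t^{(l)}\}$, the false discovery proportion splits as $\text{FDP}^{(\text{fg})}=\sum_l V_l^{+}(t^{(l)})/(1\vee R)$. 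The structural fact I would establish first is that each row $\mathcal{C}_l$ contains at most one feature from every group, since the $l$-th row collects the $l$-th largest $|W_j|$ from each column. Therefore the between-group independence and uniformity of Theorem \ref{thm:uniform_two} upgrade, within any fixed row, to the full coin-flipping property of \citet{Candes2018}: conditional on $\mathcal{F}=\sigma(|W_1|,\ldots,|W_p|,\ \text{sign}(W_j)\text{ for false }H_j^{(\text{fg})})$, the null signs in $\mathcal{C}_l$ are i.i.d.\ uniform on $\{\pm1\}$.

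With this in hand, the second step is to invoke the quoted result. Setting $M_l=\sup_{t>0}V_l^{+}(t)/(1+V_l^{-}(t))$, the within-row coin-flip structure lets me apply \citet{Katsevich2019} conditionally to obtain $\mathbf{E}[M_l\mid\mathcal{F}]\leq 1.93$. For the adaptively chosen threshold $t^{(l)}$ returned by \eqref{opt}, I would chain $V_l^{+}(t^{(l)})\leq M_l\bigl(1+V_l^{-}(t^{(l)})\bigr)$ with $V_l^{-}(t^{(l)})\leq\#\{j\in\mathcal{C}_l\mid W_j\leq -t^{(l)}\}$ to get, for each $l$,
\[
\frac{V_l^{+}(t^{(l)})}{1\vee R}\;\leq\; M_l\cdot\frac{1+\#\{j\in\mathcal{C}_l\mid W_j\leq -t^{(l)}\}}{1\vee R}\;\leq\; M_l\cdot\frac{v_l\alpha}{1.93},
\]
where the final inequality is precisely the feasibility constraint of \eqref{opt}. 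Crucially, because $M_l$ is a supremum over all $t$, it dominates the ratio at the data-dependent $t^{(l)}$ regardless of how the optimization selects it, which is what licenses substituting the adaptive threshold into a fixed inequality.

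Summing over rows yields the pointwise bound $\text{FDP}^{(\text{fg})}\leq(\alpha/1.93)\sum_l v_l M_l$. Conditioning on $\mathcal{F}$, with respect to which the pre-calculated budgets $v_l$ are measurable, and applying $\mathbf{E}[M_l\mid\mathcal{F}]\leq 1.93$ termwise gives $\mathbf{E}[\text{FDP}^{(\text{fg})}\mid\mathcal{F}]\leq\alpha\sum_l v_l$. Taking the outer expectation and using that the budgets are allocated so that $\sum_l v_l\leq 1$ delivers $\text{FDR}^{(\text{fg})}\leq\alpha$, as claimed.

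I expect the main obstacle to be the passage from the merely between-group coin-flipping property to a usable within-row property, and the handling of thresholds $t^{(l)}$ that are selected adaptively from the very statistics being tested. The alignment argument resolves the first point by guaranteeing one feature per group per row, so that within a row the null signs become genuinely independent; the supremum inside $M_l$, together with the Katsevich--Ramdas expectation bound that holds uniformly over $t$, resolves the second. A secondary item to verify with care is that the denominator $1\vee R$ is common to all rows, so the per-row constraints in \eqref{opt} must all reference this shared quantity; this coupling is exactly what makes the summed bound collapse cleanly to $\alpha\sum_l v_l$.
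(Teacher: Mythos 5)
Your proposal is correct and follows essentially the same route as the paper's proof: the row-wise decomposition of $\text{FDP}^{(\text{fg})}$, the observation that between-group coin-flipping yields the full coin-flipping property within each row (one feature per group per row), the Katsevich--Ramdas supremum bound with constant $1.93$, and the substitution of the per-row feasibility constraint of \eqref{opt} before summing the budgets $v_l$. Your explicit conditioning on $\mathcal{F}$ to justify pulling the data-dependent budgets $v_l$ out of the expectation is a slightly more careful rendering of a step the paper performs implicitly, but it is not a different argument.
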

	The proof of Theorem \ref{thm:FDR} is provided in Appendix \ref{pr:FDR}. Observing that the optimization problem \eqref{opt} is analogous to the ones in (19) and (21) of \citet{Li2021} that
	\begin{itemize}
		\item[$\star$] feature statistics $W_j$'s within the each row of Table \ref{tab:align} satisfy the coin-flipping property while $W_j$'s for $j=1,\ldots,p$ do not (which is analogous to the $p\times p$ feature statistics matrix in Section 3.1 of \citet{Li2021}) and;
		\item[$\star$] the contribution of each row of Table \ref{tab:align} to $\widehat{\text{FDP}}^{(\text{fg})}$, measured by $$\text{I}\left(\max_{j\in \mathcal{C}_l}|W_{j}|\geq t^{(l)}\right)\times\frac{1+\#\{j\in \mathcal{C}_l|W_{j}\leq -t^{(l)} \}}{1\vee[\sum_{l}\#\{j\in \mathcal{C}_l|W_{j}\geq t^{(l)} \}]},$$
		is required to be constrained under the row-specific level $\frac{v_l\alpha}{1.93}$ with $\sum_l \frac{v_l\alpha}{1.93}=\frac{\alpha}{1.93}$,
	\end{itemize}
	we develop the feature-versus-group (FVG) filter (Algorithm \ref{alg:G-FDR2}), whose steps 4-10 adopt the threshold calculation strategy in Algorithm 2 of \citet{Li2021} to compute the optimal thresholds $t^{(l)}$'s. Similar to the discussion in supplementary note 4 of \citet{Sesia2020}, the correction factor $1.93$ is required in the proof for technical reasons, while empirical evidence suggests that it may be practically unnecessary. Thus, for all experiments in Sections \ref{Simulation}--\ref{Real}, we implement the FVG filter without the correction factor $1.93$. {For empirical reference, we compare the naive filter (Algorithm \ref{alg:G-FDR}), the FVG filter with and without the $1.93$ factor via experiments in Appendix \ref{Comparison2}.}
	
	\begin{algorithm}
		\caption{Feature-versus-group (FVG) filter with group knockoffs.}\label{alg:G-FDR2}
		\begin{algorithmic}[1]
			\STATE  \textbf{Input:} Groups $B_1,\ldots,B_K$, feature statistics $W_1,\ldots,W_p$ and the target level $\alpha>0$.
			\STATE Align $W_j$'s as shown in Table \ref{tab:align} such that $W_j$'s for all features in the group $B_k$ are in the $k$-th column and $|W_{(k1)}|\geq |W_{(k2)}|\geq \cdots$.
			\STATE Compute budget $v_l$ for the $l$-th row using $|W_1|,\ldots,|W_p|$ for $l=1,2,\ldots$ such that $\sum_{l=1}^{\infty}v_l=1$.
			\STATE Compute grids $\mathcal{G}_l=\{1/v_l,2/v_l,\ldots,({neg}_l+1)/v_l\}$ for the $l$-th row where
			$neg_l=\#\{j\in \mathcal{C}_l|W_j< 0\}$.
			\STATE Combine $\mathcal{G}_{comb}=(\cup_l \mathcal{G}_l)\cup \{0\}$ and sort values ${grid}_{(1)}> {grid}_{(2)}>\ldots$ in $\mathcal{G}_{comb}$. 
			\STATE Initialize $b=0$
			\REPEAT
			\STATE Update $b=b+1$.
			\STATE Compute 
			{\begin{equation}\label{t_alpha2}
					t^{(l)}=\min\Biggl\{t>0\Bigg|\frac{1+\#\{j\in \mathcal{C}_l|W_j\leq -t\}}{v_l}\leq {grid}_{(b)}\Biggr\},
			\end{equation}}
			for $l=1,2,\ldots$.
			\UNTIL{$\text{I}\left(\max_{j\in \mathcal{C}_l}|W_{j}|\geq t\right)\times\frac{1+\#\{j\in \mathcal{C}_l|W_{j}\leq -t^{(l)} \}}{1\vee[\sum_{l}\#\{j\in \mathcal{C}_l|W_{j}\geq t^{(l)} \}]}\leq \frac{v_l\alpha}{1.93}$ for all $l$.}
			\STATE \textbf{Output:} The rejection set $\mathcal{R}^{(\text{fg})}=\cup_{l}\{j\in  \mathcal{C}_l|W_{j}\geq t^{(l)}\}$.
		\end{algorithmic}
	\end{algorithm}
	
	We also provide other variations with details in Appendix \ref{extensions}.  
	
	\begin{rmk}
		In the FVG filter (Algorithm \ref{alg:G-FDR2}), the budgets $v_l$'s can be computed adaptively to the prior belief of the number of important features within each group. Here, we provide two strategies of specifying $v_l$'s.
		\begin{itemize}
			\item If it is believed that all important groups have similar effect size on the response $Y$, we suggest letting 
			$$v_l=\frac{\sum_{k}|W_{(kl)}|}{\sum_{l}\sum_{k}|W_{(kl)}|},\quad l=1,2,\ldots,$$
			such that contribution of $\mathcal{R}^{(l)}$'s to {\rm$\widehat{\text{FDP}}^{(\text{fg})}$} are constrained by the sum of signal magnitudes  $\sum_{k}|W_{(kl)}|$ in the $l$-th row of Table \ref{tab:align}.
			\item If it is believed that most of important groups have small effect size while a small number of important groups have large effect size, we suggest letting 
			$$v_l=\frac{\sum_{k}|W_{(kl)}|/l}{\sum_{l}\sum_{k}|W_{(kl)}|/l},\quad l=1,2,\ldots,$$
			such that more budgets are allocated to the first several rows of Table \ref{tab:align} and more important features in important groups with small effect sizes are identified.
		\end{itemize}
	\end{rmk}
	
	Unlike the FDR control, asymptotic power of our FVG filter can't be rigorously proved. The main reason is that our FVG filter is a model-free approach without rigid parametric assumption on the conditional distribution $Y|\textbf{X}$ and its power relies on the consistency between $Y|\textbf{X}$ and the feature importance score. However, when a  linear model is correctly used, our FVG filter can achieve asymptotic power $1$. For example, following the proof of \citet{Zou2006}, the feature importance scores $W_j$'s will be positive with asymptotic probability $1$ for those features with nonzero contribution to $Y$, making their asymptotic probability to be included in the selection set $\mathcal{R}^{(\text{fg})}$ as 1 under mild conditions.
	
	\section{Simulated Experiments with Real-world Genetic Data}\label{Simulation}
	
	To evaluate the proposed FVG filter in controlling FDR and identifying important features with comparison to existing approaches, we conduct extensive simulated experiments. 
	
	\subsection{Data Generation}\label{Sim_setting}
	
	We simulate datasets based on a real-world genetic data to mimic the dependency structure among features in real-world genetic analysis and investigate how the proposed method performs in real applications. Specifically, we simulate datasets based on $6095$ variants in the APOE/APOC region (chromosome 19, positions 44909011 $\sim$ 45912650) from the whole-genome sequencing data (NG00067.v5) of the Alzheimer’s Disease Sequencing Project \citep{Leung2019}. In our study, we restrict our focus to $6952$ individuals with estimated European ancestry rate\footnote{Estimated by SNPWeights v2.1 \citep{Chen2013} using reference populations from the 1000 Genomes Consortium \citep{10002015}.} greater than 80\% and $p=1157$ variants $X_1,\ldots,X_p$ whose minor allele frequency (MAF) is larger than $1\%$ and pairwise correlations are in $[-0.95,0.95]$ to avoid the existence of statistically indistinguishable variants with nearly-perfect correlations. By randomly sampling $n$ individuals without replacement, each simulated dataset is obtained by collecting the sampled individuals' variants and generating responses {\rm$y_1,\ldots,y_n$} from the linear model \begin{equation}\label{linear_model}
		Y=\sum_{j=1}^pX_{j}{\beta}_{j}+e,\quad e\sim \text{N}(0,4),
	\end{equation}
	where only $k=15$ randomly selected coefficients ${\beta}_{j}$'s are nonzero and follow $N(0,0.6^2)$. In our experiments, we simulate 1000 datasets for each possible sample size  $n=500,1000$ and $2000$. To circumvent the threshold phenomenon mentioned in \citet{Gimenez2019b}, we implement the multiple knockoff counterpart described in Appendix \ref{multiple}.
	
	\subsection{Details of Implementation and Evaluation Metrics}\label{Evaluation}
	
	Among these $1157$ variants, we compute the correlation {\rm$\text{cor}(X_i,X_{j})$} between each pair of variants $(X_i,X_{j})$ over all 6952 individuals and construct $345$ variant groups {\rm${B}^{(2)}_{1},\ldots,{B}^{(2)}_{345}$} by applying the hierarchical clustering (average linkage with cutoff value $0.5$) on the distance matrix {\rm$(1-|\text{cor}(X_i,X_{j})|)_{p\times p}$}. We define important features as variants $X_j$'s whose corresponding coefficients $\beta_{j}$'s are nonzero and important groups as those containing at least one important feature. Specifically, most important groups only contain one important feature. In summary, we have 
	\begin{itemize}
		\item $H_{j}^{(\text{ff})}$'s are false for 15 important features;
		\item $H_{k}^{(\text{gg})}$'s are false for groups containing important features;
		\item $H_{j}^{(\text{fg})}$'s are false for features in those groups with false $H_{k}^{(\text{gg})}$'s.
	\end{itemize}
	
	Based on groups {\rm${B}^{(2)}_{1},\ldots,{B}^{(2)}_{345}$} and the correlation matrix  {\rm$\boldsymbol{\Sigma}=(\text{cor}(X_i,X_{j}))_{1157\times 1157}$}, we generate second-order group knockoffs under the maximum entropy (ME) construction using Algorithm 2 of \citet{Chu2024} and compute feature statistics as $W_j=\text{sign}\left(T_{j}-\widetilde{T}_{j}\right)\cdot \max\left(T_{j},\widetilde{T}_{j}\right)$ with the joint lasso importance scores in the implementation of our FVG filter.

	To evaluate the power and precision of inference results produced by our FVG filter, we derive catching sets of signals based on the selection set $\mathcal{R}_{\text{FVG}}$ as 
	$$\mathcal{CC}_{\text{FVG}}=\{\mathcal{S}_{k,\text{FVG}}={B}^{(2)}_{k}\cap\mathcal{R}_{\text{FVG}}| k=1,\ldots,345\}.$$
	For comparison, we also implement several existing methods as follows.
	\begin{itemize}
		\item \textbf{Model-X knockoff filter:} Based on the correlation matrix  {\rm$\boldsymbol{\Sigma}$}, we generate second-order knockoffs at the variant level under the maximum entropy (ME) construction and use the lasso coefficient difference statistics \citep{Candes2018} in the implementation of model-X knockoff filter. Based on the selection set $\mathcal{R}_{\text{model-X}}$, we derive catching sets as 
		$$\mathcal{CC}_{\text{model-X}}=\{\mathcal{S}_{j,\text{model-X}}=\{j\}| j\in \mathcal{R}_{\text{model-X}}\}.$$
		\item \textbf{Group knockoff filter:} We implement the group knockoff filter based on the same {\rm${B}^{(2)}_{1},\ldots,{B}^{(2)}_{345}$} and the same second-order group knockoffs as our FVG filter. Based on the selection set $\mathcal{R}_{\text{group}}$, we derive catching sets as 
		$$\mathcal{CC}_{\text{group}}=\{\mathcal{S}_{k,\text{group}}={B}^{(2)}_{k}| k\in \mathcal{R}_{\text{group}}\}.$$
		\item \textbf{KnockoffZoom and KeLP:} For these methods which identify signals at multiple layers of different resolutions, we define three resolution layers as (a) layer 0: the level of variants; (b) layer 1: the level of variant groups ${B}^{(1)}_{1},\ldots,{B}^{(1)}_{650}$ obtained by the hierarchical clustering (average linkage with cutoff value $0.25$); (c) layer 2: the level of variant groups {\rm${B}^{(2)}_{1},\ldots,{B}^{(2)}_{345}$} which are also used in the FVG filter. Here, layer 2 corresponds to the same variant group level as the group knockoff filter while layer 1 is an intermediate layers between layer 2 and the level of variants. Based on such a hierarchical structure of variant groups that each variant group at high-resolution levels is nested under one group at low-resolution levels, we implement both KnockoffZoom and KeLP at layers $0$ and $2$ (2 layers) and layers $0$, $1$ and $2$ (3 layers). Thus, both KnockoffZoom and KeLP are tackling the same task as our FVG filter to refine those important variant groups in layer 2. For each layer, we generate second-order group knockoffs under the maximum entropy (ME) construction using Algorithm 2 of \citet{Chu2024}. After implementing either KnockoffZoom or KeLP, we obtain catching sets as the most specific discoveries among all discoveries in selection sets (same as Figure 1 of \citet{Gablenz2024}) as detailed in Appendix \ref{df_cc}. Specifically, to avoid obtaining empty catching sets, we implement KeLP in our article with carefully chosen tuning parameters as detailed in Appendix \ref{KeLP}.

	\end{itemize}

	In the following section, we evaluate the empirical FDR of our FVG filter via \eqref{FDR_H_ig}, empirical power of all approaches as the proportion of important features {(with nonzero coefficients)} included in at least one catching sets. We also measure the informativeness of different approaches by measuring the average size and purity\footnote{Purity refers to the minimum absolute correlation within the catching set.} of catching sets.

	\subsection{Performance Comparison of the FVG Filter with Existing Competitors}\label{Comparison}
	
	\begin{figure}[t]
		\centering
		\begin{minipage}{0.5\linewidth}
			\centering
			\includegraphics[width=\linewidth]{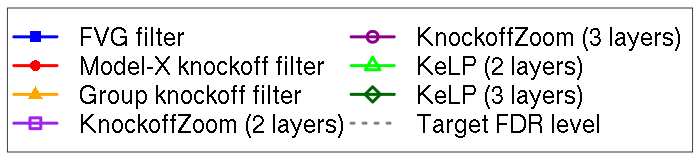}
		\end{minipage}\\
		\begin{minipage}{0.32\linewidth}
			\centering
			\includegraphics[width=\linewidth]{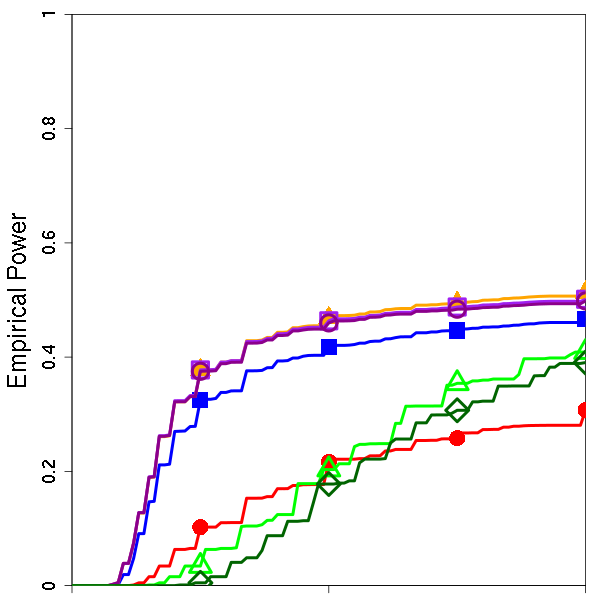}
			\includegraphics[width=\linewidth]{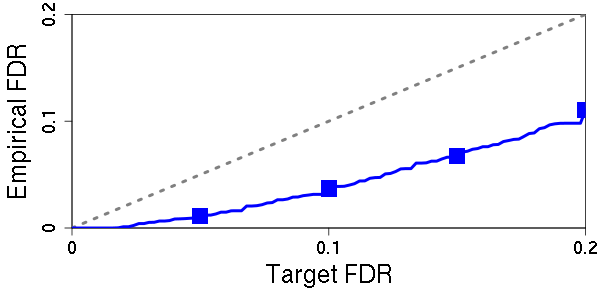}
			\par{(a) $n=500$.}
		\end{minipage}
		\begin{minipage}{0.32\linewidth}
			\centering
			\includegraphics[width=\linewidth]{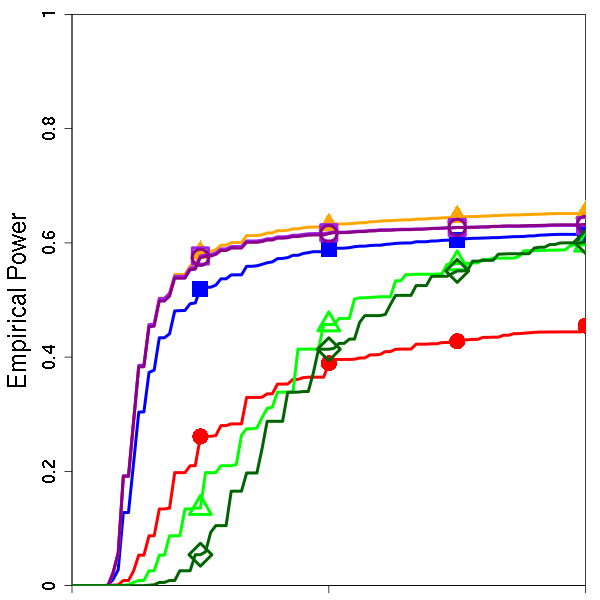}
			\includegraphics[width=\linewidth]{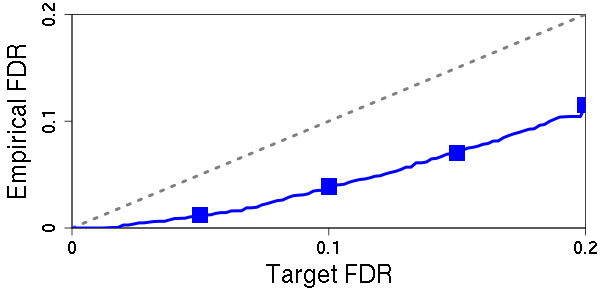}
			\par{(b) $n=1000$.}
		\end{minipage}
		\begin{minipage}{0.32\linewidth}
			\centering
			\includegraphics[width=\linewidth]{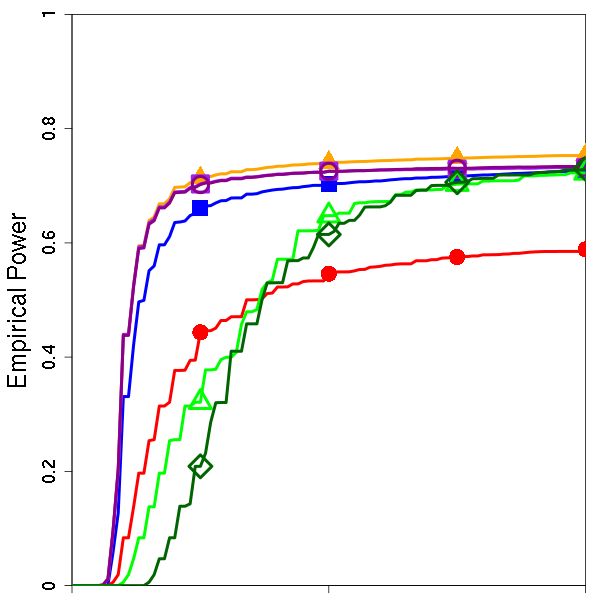}
			\includegraphics[width=\linewidth]{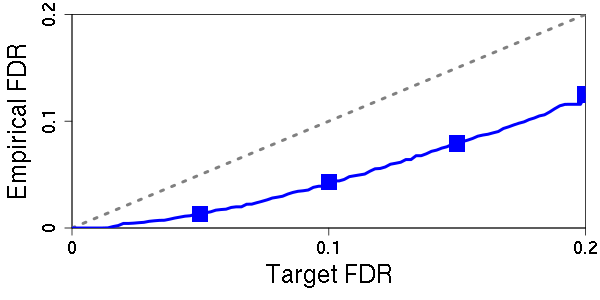}
			\par{(c) $n=2000$.}
		\end{minipage}
		\caption{Empirical FDR and power of FVG filter with respect to the target FDR level ($\alpha$) over 1000 simulated genetic datasets of different sample sizes in comparison to model-X knockoff filter \citep{Candes2018}, group knockoff filter \citep{Dai2016}, the KnockoffZoom \citep{Sesia2020} and the KeLP \citep{Gablenz2024}.}
		\label{fig:sample_size}
	\end{figure}

	\begin{figure}[t]
		\centering
		\begin{minipage}{0.32\linewidth}
			\centering
			\includegraphics[width=\linewidth]{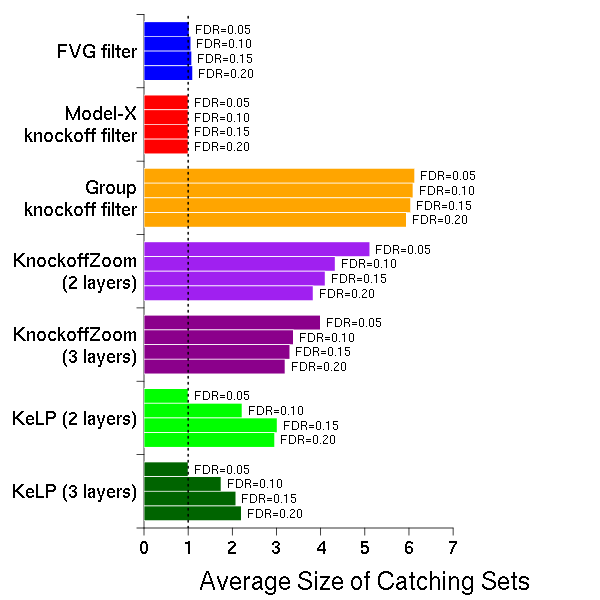}
			\includegraphics[width=\linewidth]{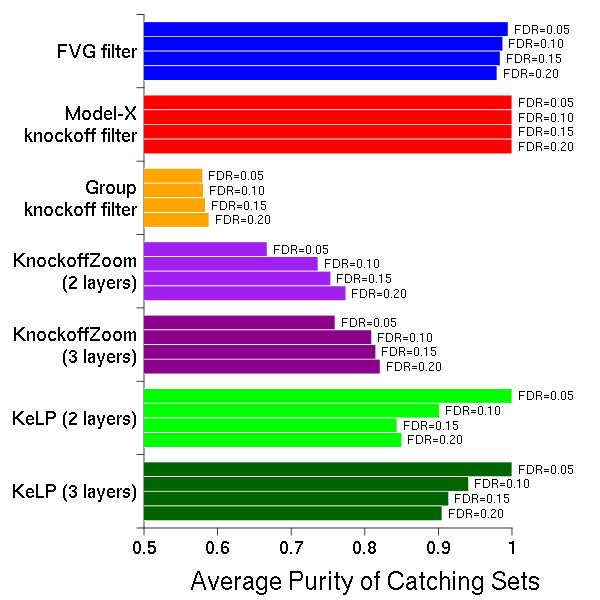}
			\par{(a) $n=500$.}
		\end{minipage}
		\begin{minipage}{0.32\linewidth}
			\centering
			\includegraphics[width=\linewidth]{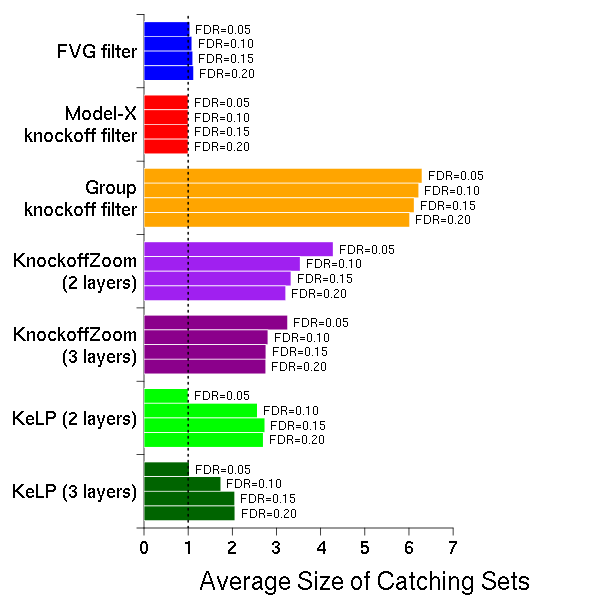}
			\includegraphics[width=\linewidth]{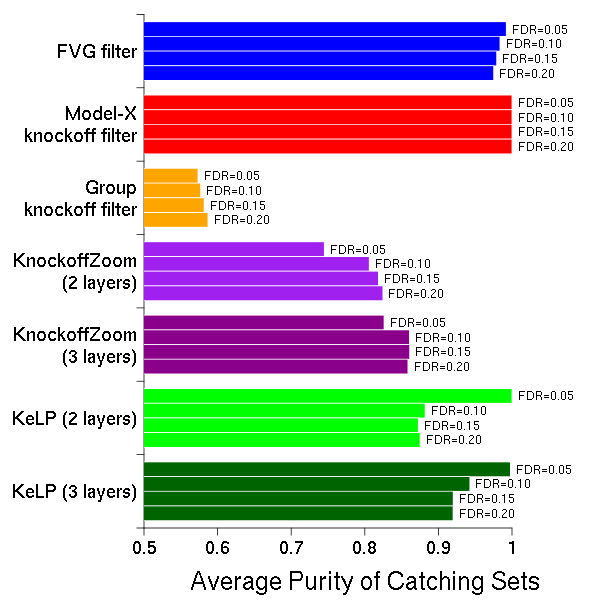}
			\par{(b) $n=1000$.}
		\end{minipage}
		\begin{minipage}{0.32\linewidth}
			\centering
			\includegraphics[width=\linewidth]{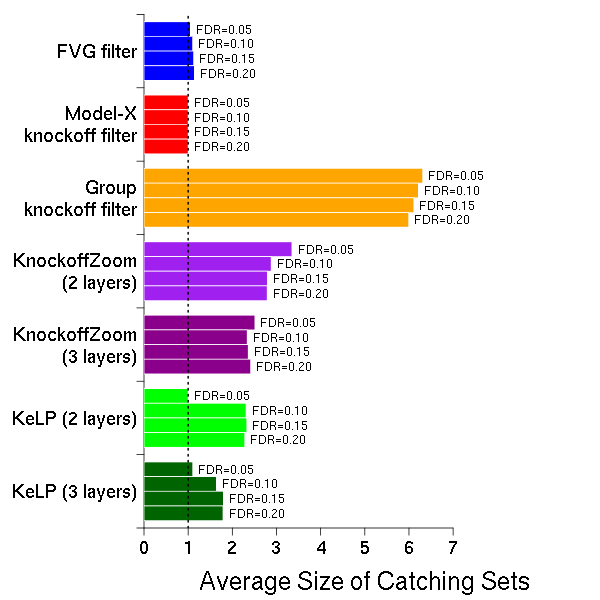}
			\includegraphics[width=\linewidth]{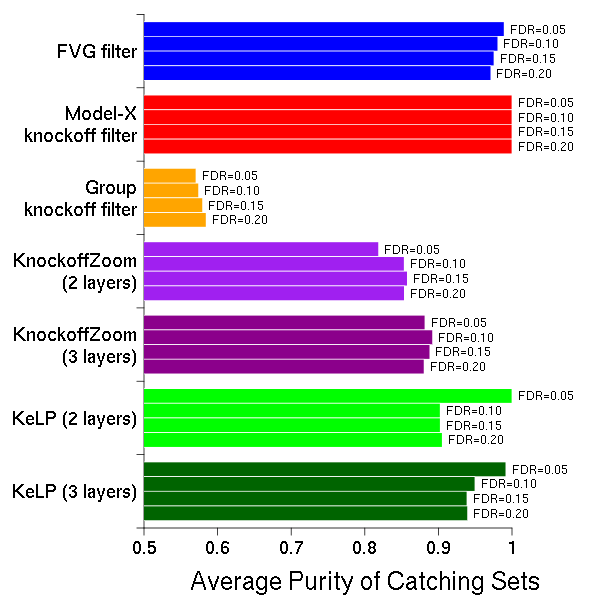}
			\par{(c) $n=2000$.}
		\end{minipage}
		\caption{Average sizes and purity of catching sets obtained FVG filter over 1000 simulated genetic datasets of different sample sizes in comparison to model-X knockoff filter \citep{Candes2018}, group knockoff filter \citep{Dai2016}, the KnockoffZoom \citep{Sesia2020} and the KeLP \citep{Gablenz2024}.}
		\label{fig:sample_size2}
	\end{figure} 
	
	Over 1000 simulated datasets of different sample sizes, the empirical FDR of our FVG filter and its power in comparison with all existing competitors under different target FDR levels are reported in Figure \ref{fig:sample_size}. We also present the average sizes and purity of catching sets produced by different methods in  Figure \ref{fig:sample_size2}. It is clear from Figure \ref{fig:sample_size} that the FVG filter controls $\text{FDR}^{(\text{fg})}$ at any target level $\alpha$. 
	
	Existing methods suffer different types of performance deterioration in identifying important features. 
	On one hand, the model-X knockoff filter suffers great power loss in Figure \ref{fig:sample_size} because high correlations of variants in the same group lower signal-to-noise ratio of false $H^{\text{(ff)}}_j$'s. On the other hand, although group knockoff filter  manages to achieve higher power in Figure \ref{fig:sample_size}, such an advantage comes from the sacrifice of informativeness of rejection sets. This can be seen from catching sets with average size of around $6$ and average purity of around $0.6$ in Figure \ref{fig:sample_size2} because variant groups are rejected as a whole without distinguishing which variants are more important. Such a poor precision of group knockoff filter can be somehow relieved by both KnockoffZoom and KeLP, which however still suffer lack of precision. Specifically, KnockoffZoom (2 layers) manages to maintain similar power as the group knockoff filter as shown in Figure \ref{fig:sample_size}. By decreasing the average size from $6$ to $3$ and increasing average purity from $0.6$ to $0.85$ when $n=2000$, the improvement of precision provided by KnockoffZoom is still limited. Adding the intermediate layer (layer 1) in KnockoffZoom only slightly improves precision of catching sets (average size of around $2.5$ and average purity of around $0.89$) without any power gain. KeLP, on the other hand, improves the precision of catching sets much greater than KnockoffZoom. With the same 2 layers in analysis, KeLP (2 layers) returns catching sets of average sizes of around $2.3$ and average purity of around $0.9$ when $n=2000$ (except for target FDR level $0.05$ where all catching sets are of size 1). Adding the intermediate layer (layer 1) could further refine catching sets  to  average size of $1.7$  and purity of  $0.95$. However, the power of KeLP is significantly lower especially under lower target FDR levels.
	
	{In contrast, by simultaneously selecting important groups and refining the catching sets, the FVG filter return catching sets that are smaller and purer without significant power loss in identifying important genetic variants. As sample size $n$ grows, FVG filter maintains catching sets of small size (around $1.1$) and high purity (higher than $0.97$) for target FDR levels $0.05$, $0.1$, $0.15$, $0.20$. With minor power loss with respect to group knockoff filter, our FVG filter can locate the real important variants from important groups in informative catching sets most of the time.  On the contrary, variant-level (layer 0) inferences of KnockoffZoom and KeLP still rely on the same variant-level knockoff of model-X knockoff filter. With low power at layer 0, a non-negligible fraction of catching sets are of size larger than $1$, leading to both large average size and lower average purity. Such an obstacle is elaborately circumvented by our FVG filter which uses group knockoffs to ease the identification of signals and lasso regression to refine the catching sets. }
	
	\subsection{{Multiple Signals in the Same Group}}\label{More_experiments}

	{In the simulation setting of Section \ref{Sim_setting}, there is only one signal within each nonnull group in most of time. As a result, heuristic identifying the contributing variants in nonnull groups are not that difficult. For example, one can always perform heuristic selection of contributing variants from variant groups identified by the group knockoff filter. However, such an identification task becomes nontrivial when there are multiple genetic signals in one variant group. To demonstrate the performance of the FVG filter in prioritizing important variants under such scenarios, we conduct more experiments with two settings of signal distributions, including:
		\begin{itemize}
			\item Setting A: $k=15$ selected coefficients ${\beta}_{j}$'s are nonzero and follow $N(0,0.6^2)$, where there is a randomly selected variant group containing 4 signals, two randomly selected  groups containing 2 signals and seven randomly selected  groups containing 1 signal;
			\item Setting B: $k=20$ selected coefficients ${\beta}_{j}$'s are nonzero and follow $N(0,0.6^2)$, where there are ten randomly selected  groups containing 2 signals.
		\end{itemize}
		Under both settings, we simulate 1000 datasets of sample size $n=2000$. We compare our FVG filter with two heuristic approaches.
		\begin{itemize}
			\item Top-1 signal: Within each variant group selected by the group knockoff filter, we construct the catching set by heuristically including only the variant with the largest $|W_j|$'s.
			\item SuSiE: Within each variant group selected by the group knockoff filter, we implement the SuSiE method \citep{Wang2020} and construct the catching set as the $95\%$ credible set.
	\end{itemize}}
	{Performance of our FVG filter and two heuristic approaches are presented in Figure \ref{fig:multiple}. We find that the FVG filter still achieves a balance between the informativeness of catching sets and power. On one hand, selecting only the top-1 signal within identified variant groups provides the most informative catching sets (with size 1 and purity 1). However, doing so may miss many signals, especially under setting B. On the other hand, using SuSiE within selected variant groups usually returns too large catching sets. In contrast, our FVG filter not only returns informative catching sets  (average size $< 1.5$ and average purity $\approx0.9$) but also has similar power as SuSiE. That is to say, among all genetic variants selected by SuSiE that are not top signals in the groups, our FVG filter adaptively selects those contributing variants without proxy ones.}
	
	{\begin{rmk}
			Although all the simulations are conducted under the linear model, the FDR control of our FVG filter still stands for other conditional distribution $Y|\textbf{X}$. The reason is that our FVG filter is developed under the similar framework of model-X knockoff, where no assumption is imposed on $Y|\textbf{X}$. Thus, even $Y|\textbf{X}$ follows a logistic regression model, using the lasso feature importance scores would not violate our FVG filter's FDR control. However, using a model that accurately depicts $Y|\textbf{X}$ to compute feature importance scores would further improve the power.
	\end{rmk}}

		\begin{figure}[t]
		\centering
		\begin{minipage}{0.9\linewidth}
			\centering
			\includegraphics[width=0.32\linewidth]{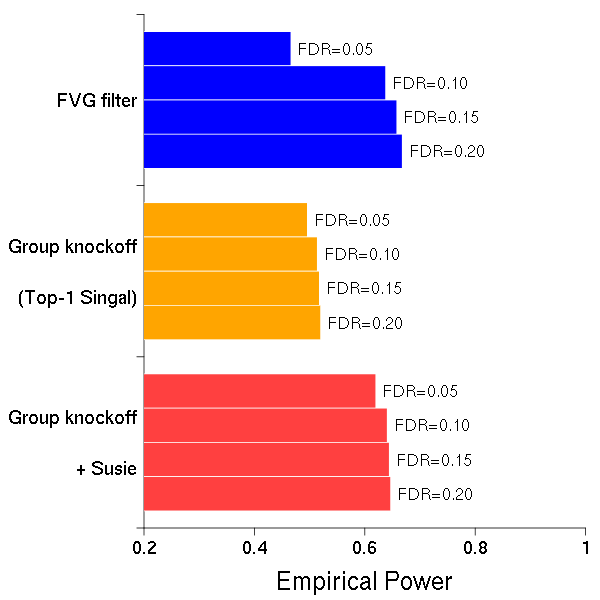}
			\includegraphics[width=0.32\linewidth]{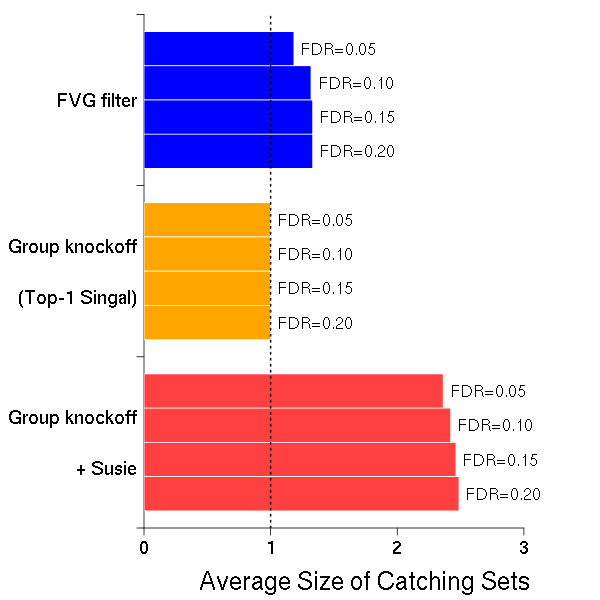}
			\includegraphics[width=0.32\linewidth]{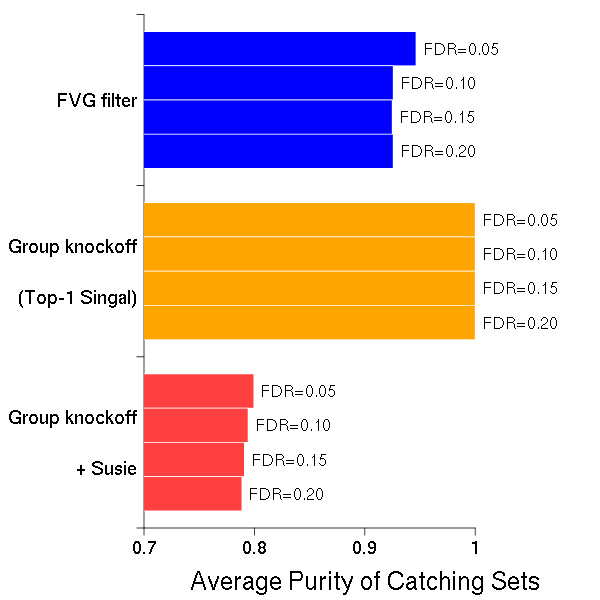}
			\par{(a) Setting A.}
		\end{minipage}
		\begin{minipage}{0.9\linewidth}
			\centering
			\includegraphics[width=0.32\linewidth]{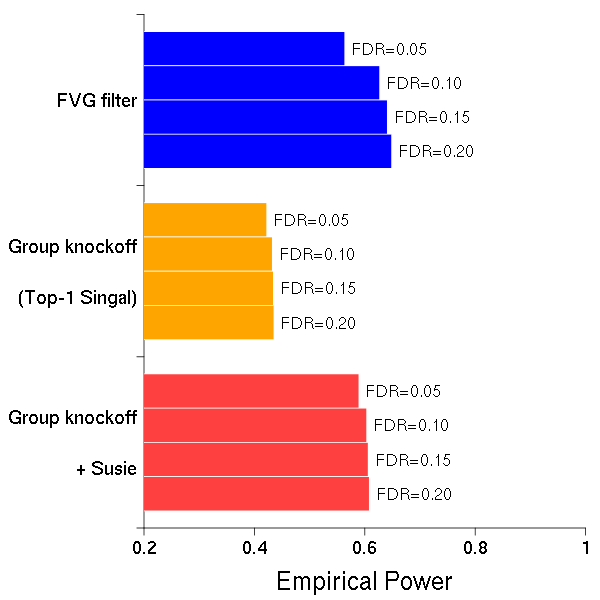}
			\includegraphics[width=0.32\linewidth]{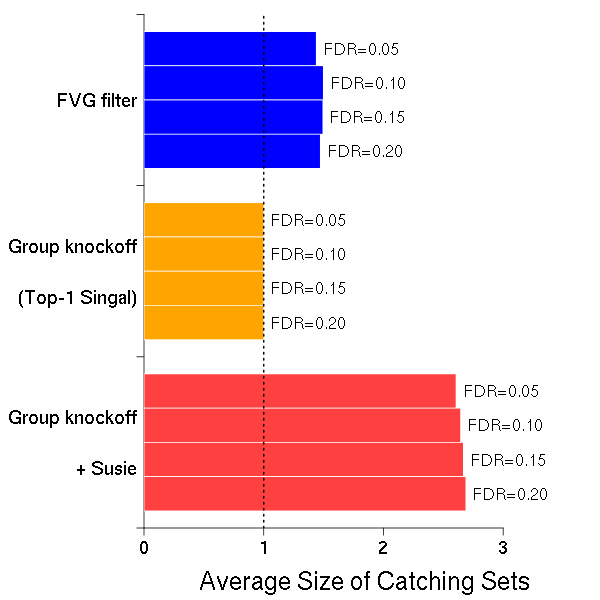}
			\includegraphics[width=0.32\linewidth]{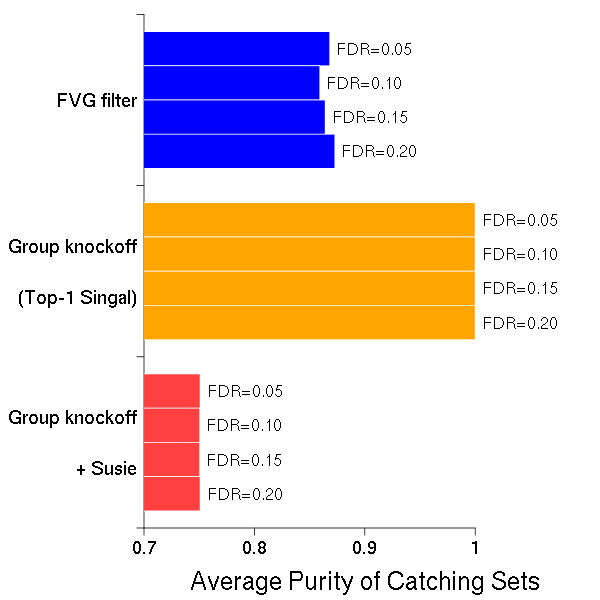}
			\par{(b) Setting B.}
		\end{minipage}
		\caption{Average power, sizes and purity of catching sets obtained by FVG filter over 1000 simulated genetic datasets of different settings in comparison to two heuristic approaches.}
		\label{fig:multiple}
	\end{figure} 
	
	\section{AD Analysis of EADB--UKBB Dataset}\label{Real}
	
	Given the promising performance of the proposed filter in simulated experiments, we apply it to the EADB-UKBB dataset where model-X knockoff filter and KeLP exhibits power loss and the group knockoff filter and KnockoffZoom returns not informative enough results in Section \ref{Existing}.

	To infer which variants are associated with AD, we first compute correlation $\text{cor}(G_i,G_j)$ between any pair of variants $G_i$ and $G_j$ using the UK Biobank directly genotyped data as the reference panel and construct variant groups $B_1,\ldots,B_K$ by applying the hierarchical clustering (average linkage with cutoff value $0.5$) on the distance matrix {\rm$(1-|\text{cor}(G_i,G_{j})|)_{p\times p}$}. By doing so, we obtain the same $321,569$ groups $B_1,\ldots,B_k$ as the group knockoff filter in Section \ref{Existing} where the average absolute correlation of variants from different groups are lower than 0.5. Based on the correlation matrix {\rm$(\text{cor}(G_i,G_{j}))_{p\times p}$} and variant groups $B_1,\ldots,B_K$, we generate second-order group knockoffs $\widetilde{\mathbb{G}}$ under the maximum entropy (ME) construction using Algorithm 2 of \citet{Chu2024} by assuming that the distribution of $\textbf{X}$ can be well approximated by a multivariate Gaussian distribution. Given the response of interest $\textbf{y}=(y_1,\ldots,y_n)^\dT$ where $y_i=1$ if the $i$-th observation corresponds to a clinically diagnosed AD case and $y_i=0$ otherwise, we use absolute values of lasso estimators of the logistic regression model,
	$$
	\text{logit}\{\text{Pr}(Y=1|{\textbf{G}},\widetilde{\textbf{G}})\}=\beta_0+\sum_{j=1}^p\left\{{G}_{j}{\beta}_{j}+\widetilde{G}_{j}\widetilde{\beta}_{j}\right\}$$ as feature importance scores. In other words, we compute  $T_j=\left|\widehat{\beta}_j\right|$ and $\widetilde{T}_j=\left|\widehat{\widetilde{\beta}}_j\right|$, which are also used in other knockoff-based methods and provide more informative rejection sets without large power loss given the large sample size. We then calculate feature statistics $W_j=\text{sign}\left(T_{j}-\widetilde{T}_{j}\right)\cdot \max\left(T_{j},\widetilde{T}_{j}\right)$
	and implement the FVG filter  with target FDR level $\alpha=0.10$. 
	
	\begin{figure}
		\centering
		\includegraphics[width=0.9\linewidth]{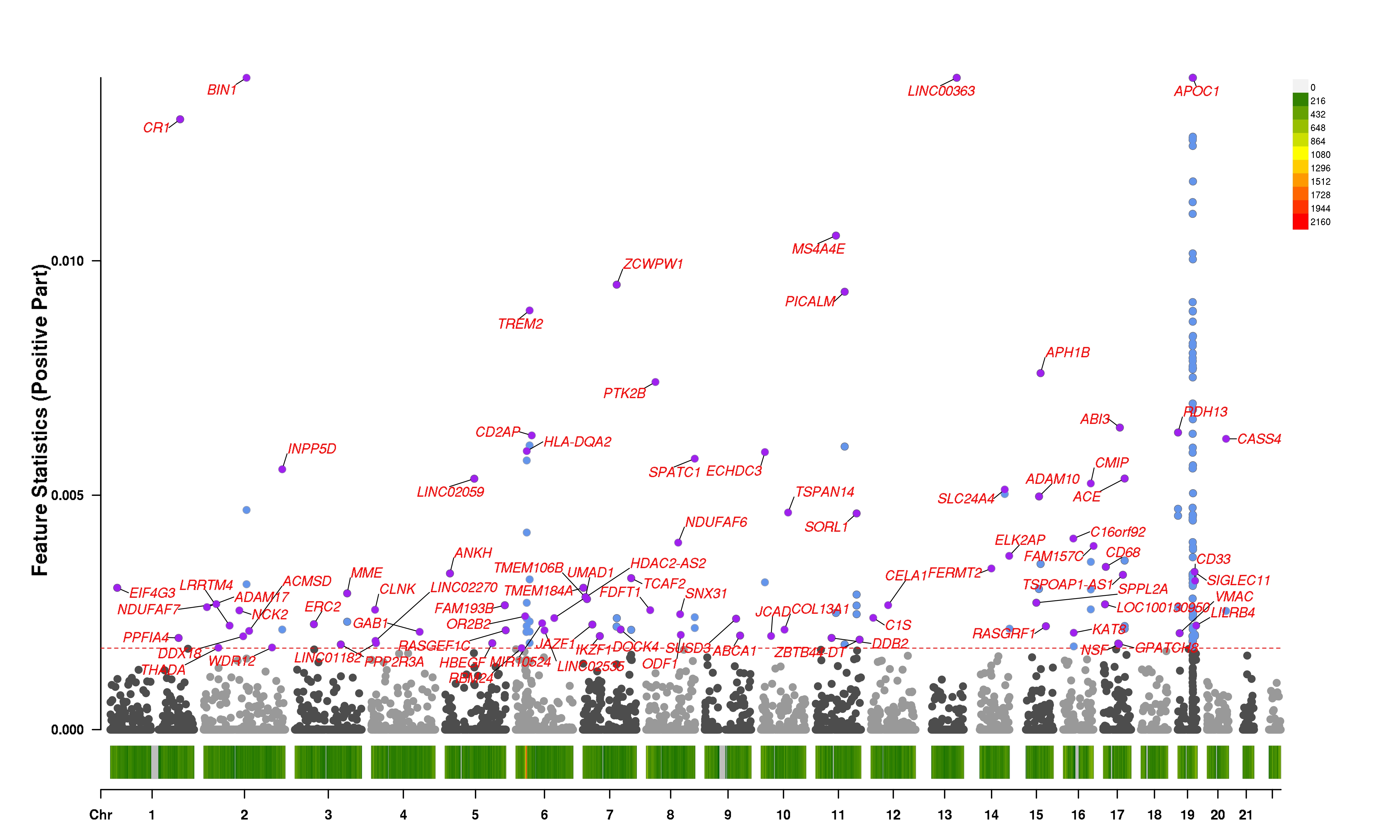}
		\caption{Manhattan plot of genetic variants under our FVG filter. Identified variants under target FDR level $\alpha=0.10$ are highlighted with names of their closest genes.}
		\label{fig:Real_Manhatton}
	\end{figure}
	
	\renewcommand{\arraystretch}{1.1}
	\begin{table}
		\centering
		\caption{Details of variants identified by the FVG filter  under the EADB--UKBB dataset and target FDR level $\alpha=0.10$. {Here, the column ``cS2Gene" denotes the gene that each genetic variant functionally annotates, and the column ``Annotation" depicts the type of annotation.}}\label{Real_table1}
		\resizebox{\columnwidth}{!}{
			\begin{tabular}{lrrrrlrrrrlrrr}
				\toprule
				Identified&\multirow{2}{*}{Position}&\multirow{2}{*}{cS2Gene}&\multirow{2}{*}{Annotation}&&Identified&\multirow{2}{*}{Position}&\multirow{2}{*}{cS2Gene}&\multirow{2}{*}{Annotation}&&Identified&\multirow{2}{*}{Position}&\multirow{2}{*}{cS2Gene}&\multirow{2}{*}{Annotation}\\
				Variant&&&&&Variant&&&&&Variant&&&\\\midrule
				\multicolumn{4}{c}{Chromosome 1}&&rs3740890&130385218&&&&rs419010&44865063&PVRL2&GTeX\\\cline{1-4}\cline{6-9}
				rs2305463& 20853688&EIF4G3&Exon&&\multicolumn{4}{c}{Chromosome 12}&&rs365653&44858389&&\\\cline{6-9}
				rs782791&203029691&PPFIA4&GTeX,ABC&&rs7183& 7070715&C1S&Exon&&rs17561351&44869072&TOMM40&eQTLGen,ABC\\
				rs4844610&207629207&CR1&GTeX&&rs17860282&51346566&CELA1&Promoter&&rs73052307&44881148&APOE&EpiMap\\\cline{1-4}\cline{6-9}
				\multicolumn{4}{c}{Chromosome 2}&&\multicolumn{4}{c}{Chromosome 13}&&rs11668327&44895376&&\\\cline{1-4}\cline{6-9}
				rs35933838&  9544047&CPSF3&eQTLGen&&rs9516168&93085490&&&&rs449647&44905307&APOE&Promoter,ABC\\\cline{6-9}
				rs6726709& 37270395&&&&\multicolumn{4}{c}{Chromosome 14}&&rs41290122&44880326&&\\\cline{6-9}
				rs77059113& 43445369&&&&rs17125924& 52924962&&&&rs10420036&44882772&APOE&EpiMap\\
				rs113068192& 76786226&&&&rs12590654& 92472511&&&&rs283808&44883777&PVRL2&GTeX\\
				rs116038905&105805908&&&&rs4904929& 92470949&&&&rs283813&44885917&PVRL2&Exon\\
				rs74851408&117201103&&&&rs74093831&105733666&CRIP1&EpiMap&&rs7254892&44886339&APOE&EpiMap=0.500\\
				rs6733839&127135234&ERCC3&Cicero&&rs873533&106667442&&&&rs7412&44908822&APOE&Exon\\\cline{6-9}
				rs744373&127137039&BIN1&GTeX&&\multicolumn{4}{c}{Chromosome 15}&&rs445925&44912383&APOE&GTeX\\\cline{6-9}
				rs2118506&127139927&BIN1&EpiMap&&rs12595082&50715532&&&&rs390082&44913574&APOC1&Promoter,EpiMap\\
				rs4954187&134842848&&&&rs2043085&58388755&LIPC&GTeX&&rs190712692&44921921&&\\
				rs6725887&202881162&&&&rs593742&58753575&&&&rs141622900&44923535&&\\
				rs10933431&233117202&NGEF&GTeX&&rs16946801&63312881&CA12&GTeX&&rs41290100&44867684&&\\
				rs7421448&233117495&INPP5D&EpiMap&&rs117618017&63277703&APH1B&Exon,GTeX,eQTLGen,ABC&&rs187183066&44872328&CEACAM22P&GTeX\\\cline{1-4}
				\multicolumn{4}{c}{Chromosome 3}&&rs56117790&79075361&&&&rs41290108&44874585&PVRL2&Promoter\\\cline{1-4}\cline{6-9}
				rs4974180& 56200107&&&&\multicolumn{4}{c}{Chromosome 16}&&rs73033507&44928146&APOC4&EpiMap\\\cline{6-9}
				rs7634492&136105288&SLC35G2&GTeX&&rs12325539&30022312&DOC2A&Promoter&&rs79149284&45010596&DMWD&GTeX\\
				rs16824536&155069722&&&&rs11865499&31120929&KAT8&GTeX,eQTLGen&&rs77196615&44877078&ZNF296&EpiMap\\
				rs61762319&155084189&MME&Exon&&rs78924645&31143037&PRSS36&Promoter,GTeX&&rs79701229&44881674&PVRL2&Promoter\\\cline{1-4}
				\multicolumn{4}{c}{Chromosome 4}&&rs8058370&81738205&PLCG2&Promoter&&rs3745150&44882502&TOMM40&GTeX\\\cline{1-4}
				rs6448453& 11024404&&&&rs12446759&81739398&PLCG2&Promoter,Cicero&&rs157580&44892009&TOMM40&Promoter\\
				rs13116567& 12239152&&&&rs11548656&81883307&PLCG2&Exon&&rs2075649&44892073&TOMM40&Promoter,GTeX,eQTLGen\\
				rs502746& 14019593&&&&rs1071644&81937798&PLCG2&Exon&&rs8106922&44898409&&\\
				rs116791081&143428212&&&&rs56407236&90103687&&&&rs405697&44901434&TOMM40&Exon\\\cline{1-4}\cline{6-9}
				\multicolumn{4}{c}{Chromosome 5}&&\multicolumn{4}{c}{Chromosome 17}&&rs7259620&44904531&APOC1&GTeX\\\cline{1-4}\cline{6-9}
				rs25992& 14707491&ANKH&Exon,Cicero&&rs7225151& 5233752&SCIMP&GTeX,ABC,Cicero&&rs405509&44905579&APOE&Promoter,ABC\\
				rs10068419& 86923485&&&&rs9901675& 7581494&CD68&Exon&&rs440446&44905910&APOE&Exon,ABC\\
				rs7268&140332965&HBEGF&Exon&&rs79759095&44451394&&&&rs142042446&44883210&APOE&EpiMap\\
				rs335473&177559423&RAB24&EpiMap&&rs199451&46724418&KANSL1-AS1&GTeX&&rs12972970&44884339&TOMM40&ABC\\
				rs4700716&180138792&RASGEF1C&Promoter&&rs616338&49219935&ABI3&Exon,EpiMap&&rs283815&44887076&&\\\cline{1-4}
				\multicolumn{4}{c}{Chromosome 6}&&rs2632516&58331728&BZRAP1&ABC&&rs6857&44888997&PVRL2&Exon\\\cline{1-4}
				rs78799498& 17319361&RBM24&ABC&&rs9896864&63458947&&&&rs71352238&44891079&TOMM40&Promoter,GTeX,ABC\\
				rs2130357& 27919052&PGBD1&GTeX&&rs4277405&63471557&ACE&GTeX&&rs184017&44891712&TOMM40&Promoter,ABC\\
				rs3132451& 31614248&AIF1&Promoter,EpiMap&&rs4309&63482562&ACE&Exon,EpiMap&&rs2075650&44892362&&\\
				rs532965& 32610196&HLA-DRB5&ABC&&rs3730025&63480412&ACE&Exon&&rs157581&44892457&TOMM40&Exon\\\cline{6-9}
				rs9469112& 32447376&&&&\multicolumn{4}{c}{Chromosome 19}&&rs34095326&44892587&BCAM&GTeX\\\cline{6-9}
				rs9271162& 32609938&HLA-DRB1&EpiMap&&rs4147918& 1058177&ABCA7&Exon&&rs34404554&44892652&&\\
				rs9273349& 32658092&&&&19:1051137\_CTG\_C& 1051138&ABCA7&Exon,GTeX,EpiMap,ABC&&rs157582&44892962&TOMM40&Promoter\\
				rs9275184& 32686937&HLA-DQA2&EpiMap&&rs3752246& 1056493&ABCA7&Exon,GTeX&&rs10119&44903416&TOMM40&Exon,GTeX\\
				rs3997854& 32715138&TAP2&GTeX&&rs35917007& 1849148&REXO1&Promoter,GTeX&&rs769449&44906745&APOE&Promoter,ABC\\
				rs114812713& 41066261&OARD1&Exon&&rs10404195& 5908958&VMAC&Exon&&rs75627662&44910319&SNRPD2&GTeX\\
				rs7748513& 41160234&TREML2&GTeX&&rs60184524&43790063&&&&rs7256200&44912678&APOC1&GTeX,EpiMap,ABC\\
				rs2093395& 41187288&TREML2&EpiMap&&rs445752&43956053&&&&rs438811&44913484&APOC1&Promoter,ABC\\
				rs143332484& 41161469&TREM2&Exon,EpiMap&&rs2125579&44288548&ZNF235&Exon,GTeX,eQTLGen&&rs12721046&44917997&&\\
				rs9394778& 41247320&TREML2&EpiMap&&rs8107530&44332247&&&&rs12721051&44918903&APOC1&Exon\\
				rs3997700& 41251889&&&&rs62116778&44389907&&&&rs111789331&44923868&&\\
				rs2171089& 47548075&CD2AP&GTeX&&rs151111866&44387243&ZNF285&Exon&&rs73936968&44892559&&\\
				rs117591349& 78478451&&&&rs2571174&44430314&ZNF229&Exon&&rs439401&44911194&APOE&GTeX\\
				rs313161& 85332752&&&&rs17800789&44510962&&&&rs59325138&44913034&APOC1P1&GTeX,Cicero\\
				rs976271&114361563&&&&rs7254925&44513811&&&&rs584007&44913221&APOE&GTeX\\\cline{1-4}
				\multicolumn{4}{c}{Chromosome 7}&&rs74602839&44481157&&&&rs157595&44922203&&\\\cline{1-4}
				rs10952168&  1554792&TMEM184A&Promoter,GTeX,ABC&&rs62116891&44524436&&&&rs1064725&44919304&APOC1&Exon\\
				rs12669393&  7815748&&&&rs930461&44539272&CEACAM22P&Exon&&rs118060185&44928417&ZNF155&GTeX\\
				rs3173615& 12229791&TMEM106B&Exon&&rs16979269&44552920&ZNF180&GTeX&&rs144311893&44920687&&\\
				rs2189965& 28132395&&&&rs56757528&44561286&&&&rs4803771&44924391&EML2&GTeX\\
				rs4917014& 50266267&IKZF1&EpiMap,Cicero&&rs10405030&44565601&&&&rs60049679&44926451&APOC1P1&Promoter\\
				rs1476679&100406823&LAMTOR4&GTeX&&rs2357100&44595918&&&&rs114533385&44933496&ZNF226&GTeX\\
				rs41280968&100105324&AP4M1&Exon,ABC&&rs7259313&44600383&&&&rs79429216&44942260&APOC4&Exon,GTeX,EpiMap\\
				rs221834&100745552&UFSP1&Cicero&&rs17658507&44617986&PVR&GTeX&&rs5167&44945208&APOC4&Exon\\
				rs144867634&111940111&DOCK4&Exon&&rs80122548&44618415&&&&rs3760627&44953923&CLPTM1&Promoter\\
				rs12703526&143410495&EPHA1-AS1&ABC&&rs62119261&44618732&APOC4&EpiMap&&rs16979595&44974124&&\\
				rs7805776&143427203&TAS2R60&eQTLGen&&rs203717&44636512&IGSF23&Exon&&rs73560271&45051621&ZNF296&EpiMap\\\cline{1-4}
				\multicolumn{4}{c}{Chromosome 8}&&rs2965160&44692865&CEACAM19&eQTLGen&&rs78620885&45087826&PPP1R37&ABC\\\cline{1-4}
				rs4731& 11808828&FDFT1&Exon,ABC&&rs2965164&44698782&CEACAM16&Promoter&&rs117106519&45047229&CLASRP&ABC\\
				rs73223431& 27362470&PTK2B&GTeX,eQTLGen&&rs62120574&44711506&&&&rs3178166&45090912&GEMIN7&Exon,GTeX,eQTLGen\\
				rs9297949& 94957217&NDUFAF6&Promoter&&rs76419583&44714590&CEACAM22P&GTeX&&rs10405086&45123977&&\\
				rs1693551&100663356&SNX31&Exon,GTeX&&rs2965109&44722081&BCL3&ABC&&rs2532037&45194545&&\\
				rs2436889&102564430&&&&rs1004165&44728939&&&&rs3848526&45195982&SYMPK&GTeX\\
				rs79832570&144042819&PARP10&GTeX&&rs2965169&44747899&BCL3&Exon,ABC&&rs10411314&45224801&EXOC3L2&Exon\\
				rs61732533&144053248&OPLAH&Exon&&rs116480598&44753996&BCL3&ABC&&rs17356664&45237513&&\\
				rs34674752&144099319&SHARPIN&Exon,ABC=0.500&&rs114294368&44789899&&&&rs77740243&45248200&&\\\cline{1-4}
				\multicolumn{4}{c}{Chromosome 9}&&rs75599500&44740439&BCL3&EpiMap&&rs11559024&45317925&CKM&Exon\\\cline{1-4}
				rs1131773& 93077974&SUSD3&Exon,GTeX,EpiMap&&rs2965174&44741758&CBLC&GTeX&&rs17875640&45323890&CKM&Promoter,ABC\\
				rs1883025&104902020&&&&rs10408993&44802496&&&&rs73047784&45458317&&\\\cline{1-4}
				\multicolumn{4}{c}{Chromosome 10}&&rs117326714&44783846&&&&rs3916883&45352060&ERCC2&ABC\\\cline{1-4}
				rs7920721&11678309&USP6NL&EpiMap,ABC,Cicero&&rs28399654&44813331&BCAM&Exon&&rs4802253&45401737&PPP1R13L&ABC\\
				rs7068614&11679478&&&&rs117142879&44802327&&&&rs2336219&45409148&PPP1R13L&GTeX\\
				rs117980033&29966853&&&&rs28399637&44820881&BCAM&Promoter,GTeX,EpiMap,ABC&&rs7256865&45515149&RTN2&eQTLGen\\
				rs1227759&69831131&COL13A1&GTeX,eQTLGen&&rs10405693&44823407&BCAM&EpiMap&&rs73942917&45543797&DMWD&Cicero\\
				rs6586028&80494228&TSPAN14&GTeX,ABC&&rs149529419&44868071&&&&rs1264226&45559909&OPA3&GTeX\\\cline{1-4}
				\multicolumn{4}{c}{Chromosome 11}&&rs140684051&44896199&&&&rs73568973&45549792&OPA3&Exon\\\cline{1-4}
				rs1685404& 47222114&&&&rs73572039&44839373&PVRL2&GTeX&&rs73570937&45588621&GPR4&EpiMap\\
				rs1582763& 60254475&&&&rs76692773&44890954&TOMM40&Promoter,ABC&&rs75987583&45745299&DMWD&GTeX\\
				rs11824773& 60309467&MS4A6E&Cicero&&rs61642202&44842026&ZNF224&GTeX&&rs875121&49948251&&\\
				rs536841& 86076782&&&&rs117310449&44890259&TOMM40&Promoter,ABC&&rs2075803&51125272&SIGLEC9&Exon,eQTLGen\\
				rs10792832& 86156833&PICALM&ABC&&rs77301115&44893716&TOMM40&Promoter,eQTLGen&&rs1710398&51223655&CD33&GTeX\\
				rs3844143& 86139201&EED&EpiMap&&rs79398853&44895528&&&&rs3865444&51224706&CD33&Promoter\\
				rs74685827&121482368&SORL1&ABC,Cicero&&rs115881343&44899959&TOMM40&eQTLGen&&rs731170&54664811&LILRB4&Exon\\\cline{11-14}
				rs2298813&121522975&SORL1&Exon&&rs41289512&44848259&CLASRP&Cicero&&\multicolumn{4}{c}{Chromosome 20}\\\cline{11-14}
				rs3781837&121578263&&&&rs12691088&44915229&APOC1&Exon,ABC&&rs6064392&56409712&CASS4&GTeX,eQTLGen,EpiMap,ABC\\
				rs12272618&121589615&SORL1&Promoter&&rs3865427&44877704&PVRL2&GTeX&&rs6069749&56446519&CASS4&GTeX\\
				\bottomrule
		\end{tabular}}
	\end{table}

	Variants in the rejection set $\mathcal{R}^{(\text{fg})}$ under target FDR level $\alpha=0.10$ are presented in Figure \ref{fig:Real_Manhatton} with their positive part of importance scores $W_j\cdot \text{I}(W_j>0)$'s and names of the closest genes.  Details of these variants are provided in Table \ref{Real_table1}.  In total, we identify 266 variants that contribute to AD variation substantially, which comes from 198 groups in 89 loci.  Similar to the literature, multiple variants are identified in the APOE/APOC region with the strongest association to AD (chromosome 19, positions 44909011 $\sim$ 45912650). In addition, we also manage to identify variants \texttt{rs6733839}, \texttt{rs744373} and \texttt{rs2118506} close to gene ``BIN1", variant \texttt{rs4844610} close to gene ``CR1", variants \texttt{rs536841}, \texttt{rs10792832} and \texttt{rs3844143} close to gene ``PICALM", variants \texttt{rs4277405}, \texttt{rs4309} and \texttt{rs3730025} close to gene ``ACE", all of which are also reported in  \citet{He2022} and \citet{Bellenguez2022}. Compared with the model-X knockoff filter, the FVG filter  only misses 2 loci identified by the model-X knockoff filter but discover 53 more loci. The advantage of our FVG filter over KeLP is more dominating.  No matter whether the intermediate layer is included in inference of KeLP, all KeLP-identified loci are discovered by our FVG filter while our FVG filter also returns at least another 44 new loci. This suggests that our method would not suffer from the same power loss issue of the model-X knockoff filter and KeLP. Compared to the group knockoff filter and KnockoffZoom\footnote{KnockoffZoom's result at the group-versus-group level is the same as the group knockoff filter, no matter whether the intermediate layer is included in inference.} which identify 198 AD-associated variant groups from 91 loci under the same target FDR level $0.10$, our FVG filter does not suffer power loss. As shown in Figure \ref{fig:Real} (a), among all variant groups identified by either the group knockoff filter (and KnockoffZoom) or our FVG filter\footnote{Here, we say a variant group is identified by our FVG filter if it contains at least one variant identified by our FVG filter.}, 	192 groups are identified by both methods while both our FVG filter and the group knockoff filter identify 6 groups not in common. This suggests the consistency of inference results between our FVG filter  and the group knockoff filter (and KnockoffZoom). However, our FVG filter manages to detect signals at variant \texttt{rs7225151} close to gene ``LOC100130950", while none of variant groups in the ``LOC100130950" locus are identified by the group knockoff filter (and KnockoffZoom). 

	{

	}
	\begin{figure}
		\centering
		\begin{minipage}[t]{0.48\linewidth}
			\centering
			\includegraphics[width=1\linewidth]{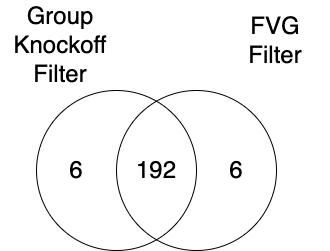}
			\par{(a) Venn diagram of  variant groups identified by group knockoff filter and our FVG filter.}
		\end{minipage}
		\begin{minipage}[t]{0.48\linewidth}
			\centering
			\includegraphics[width=1\linewidth]{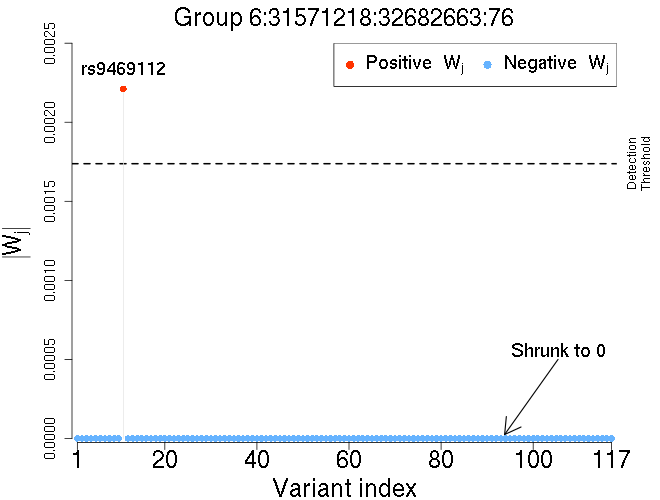}
			\par{{(b) Manhattan plot of feature importance statistics of different genetic variant  in a high-LD region between positions 31571218 $\sim$ 32682663,  chromosome 6, under the FVG filter.}}
		\end{minipage}\\ 
		\begin{minipage}{0.48\linewidth}
			\centering
			\includegraphics[width=1\linewidth]{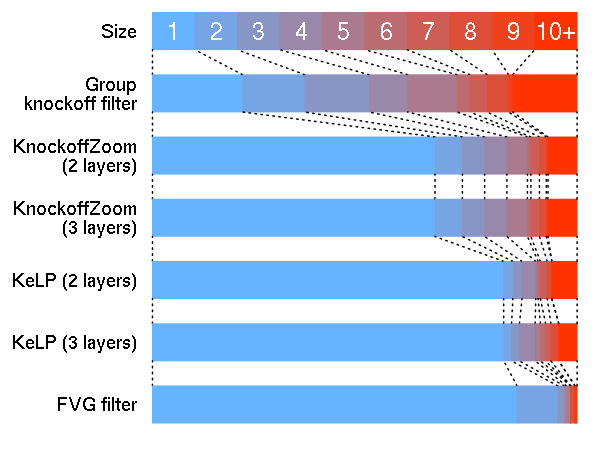}
			\par{(c) Comparison of catching set size among different approaches.}
		\end{minipage}
		\begin{minipage}{0.48\linewidth}
			\centering
			\includegraphics[width=1\linewidth]{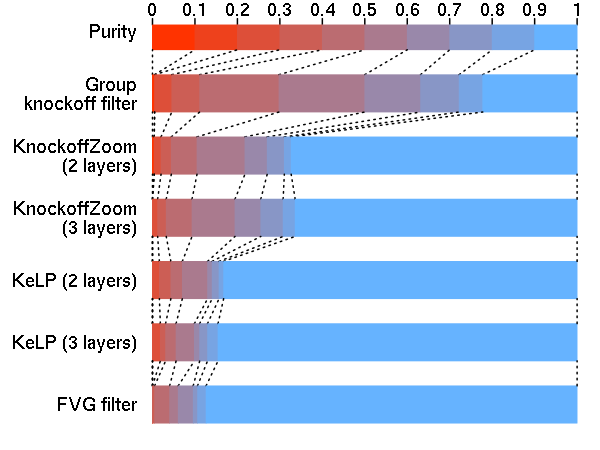}
			\par{(d) Comparison of catching set purity among different approaches.}
		\end{minipage}
		\caption{Comparison of our FVG filter and other methods in analyzing the EADB--UKBB dataset.}
		\label{fig:Real}
	\end{figure}

	Without power loss, our FVG filter {returns informative selection sets via simultaneous fine-mapping}. This can be seen in Figure \ref{fig:Real} (c)-(d) where empirical distributions of catching set sizes and purity are displayed. Specifically, all but one catching sets  of our FVG filter are of size no larger than $8$ while most of catching sets are of size 1 and purity greater than $90\%$. In other words, most of catching sets only contain the important variants or strongly correlated proxy variants. This results in the smallest average size (1.343) and highest average purity (0.951) of catching sets.  In contrast, 32 catching sets obtained by the group knockoff filter contain more than $8$ variants with strong impurity and are far from being informative. Specifically, for the five largest catching sets obtained by the group knockoff filter shown in Table \ref{Real_table2}, our FVG filter manages to distinguish one or two variants from more than 50 proxy variants in each. For comparison, KnockoffZoom (2 layers) only manages to narrow the signal of group ``6:31571218:32682663:64" down to the variant \texttt{rs532965} while the rest four groups are included as a whole in the resolution-adaptive discoveries. Adding the intermediate layer only refines the catching set of group  ``17:43056905:45876021:36" (size: 314) to a catching set of subgroup with size 281 and the catching set of group  ``6:31571218:32682663:83" (size: 57) to a catching set of subgroup with size 19. Thus, the average size and purity of catching sets provided by KnockoffZoom is still suboptimal. In addition, catching sets provided by our FVG filter also dominate the KeLP in size. Specifically, among all catching sets provided by our FVG filter, more than 95\% are if size 1 or 2 (less than 85\% for KeLP (2 layers) and KeLP (3 layers)). This suggests the advantages of the proposed FVG filter in refining AD-associated genetic discoveries over KnockoffZoom and KeLP. Comprehensive comparisons of our method with all existing methods are summarized in Table \ref{Tab:motivation2}.
	
	\renewcommand{\arraystretch}{0.9}
	\begin{table}
		\centering
		\caption{List of variants identified by the FVG filter  in the five largest catching sets obtained by the group knockoff filter.}\label{Real_table2}
		\resizebox{\columnwidth}{!}{
			\begin{tabular}{lrrrrrr}
				\toprule
				Identified&\multirow{2}{*}{Chromosome}&\multirow{2}{*}{Position}&Catching sets obtained by&Closest&\multirow{2}{*}{cS2Gene}&\multirow{2}{*}{Annotation}\\
				Variant&&&the group knockoff filter (size)&Gene&&\\
				\midrule
				rs3132451&6&31614248&6:31571218:32682663:7 (232)&AIF1&AIF1&Promoter,EpiMap\\\midrule
				rs532965&6&32610196&6:31571218:32682663:64 (84)&HLA-DRB1&HLA-DRB5&ABC\\\midrule
				rs9469112&6&32447376&6:31571218:32682663:76 (117)&HLA-DRA&&\\\midrule
				rs9271162&6&32609938&\multirow{2}{*}{6:31571218:32682663:83 (57)}&HLA-DRB1&HLA-DRB1&EpiMap\\
				rs9273349&6&32658092&&HLA-DQB1&&\\\midrule
				rs199451&17&46724418&17:43056905:45876021:36 (314)&NSF&KANSL1-AS1&GTeX\\
				\bottomrule
		\end{tabular}}
	\end{table}
	
	\renewcommand{\arraystretch}{0.9}
	\begin{table}
		\centering\caption{Summary of results by applying our FVG filter to the EADB-UKBB dataset in comparison with existing methods, where bold values indicate the best performance under different evaluation metrics.}\label{Tab:motivation2}
		\resizebox{\columnwidth}{!}{\begin{tabular}{lrrrr}
				\toprule
				\multirow{3}*{Method}&Number of &Average number of &Average&Average \\
				&identified&variants per & size of& purity of\\
				&loci&identified locus&catching sets&catching sets\\
				\midrule
				Marginal association test&54&17.093&17.093&0.489\\
				Model-X knockoff filter&38&{4.263}&1.000$^\dagger$&1.000$^\dagger$\\
				Group knockoff filter&\textbf{91}&19.923&9.157&0.651\\
				KnockoffZoom (2 layers)&\textbf{91}&15.152&5.621&0.851\\
				KnockoffZoom (3 layers)&\textbf{91}&14.681&5.409&0.858\\
				KeLP (2 layers)&45&10.422&2.535&0.918\\
				KeLP (3 layers)&38&10.605&2.488&0.932\\
				\midrule
				FVG filter&89&\textbf{2.989}&\textbf{1.343}&\textbf{0.951}\\
				\bottomrule
				\multicolumn{5}{l}{\small $^\dagger$: Size and purity of catching sets provided by model-X knockoff filter are trivially 1 and thus not included}\\
				\multicolumn{5}{l}{\small \hspace{0.2cm} in comparison.}\\
		\end{tabular}}
	\end{table}
	
	{In addition to statistical performance, the biological informativeness of discoveries made by our FVG filter is also our concern.  We apply a SNP-to-gene linking strategy \citep{Gazal2022} to investigate whether the identified genetic variants are functionally enriched or not. Here, genetic variants in the same group can functionally annotate different genes or have different annotation types. There are in total 7 types of annotations as follows.
		\begin{itemize}
			\item Exon: The genetic variant is located less than 20 base pairs from an exon of the annotated gene.
			\item Promoter: The genetic variant lies in promoter regions of the annotated gene.
			\item GTEx: The genetic variant lies in an expression quantitative trait locus of the annotated gene across 54 cell types, according to the GTEx data.
			\item eQTLGen: The genetic variant lies in an expression quantitative trait locus of the annotated gene in blood, according to the eQTLGen data.
			\item Epimap: The genetic variant lies in enhancer regions of the annotated gene across 833 cell types under the EpiMap model.
			\item ABC: The genetic variant lies in Hi-C linked enhancer regions of the annotated gene across 167 cell types under the Activity-by-Contact model.
			\item Cicero: The genetic variant has co-accessibility links to the annotated gene's promoter across 61,806 blood/basal cells.
		\end{itemize}
		Among 266 genetic variants identified by our FVG filter in Table \ref{Real_table1}, 180 (67.67\%) can be mapped with function evidence. This proportion of functional annotation is significantly higher than the average percentage of the background genome (30.52\%) and variants in catching sets of the group knockoff filter (56.54\%) as shown in Table \ref{Real_table4}. Such an advantage is more significant in annotation types ``Exon", ``Promoter" and ``ABC". As a result, compared to the group-level inference, discoveries made by the FVG filter is more biological informative for further drug development or therapy designs.}

	\renewcommand{\arraystretch}{1}
	\begin{table}
		\centering
		\caption{Proportions of different annotations for variants in catching sets of group knockoff filter and the FVG filter.}\label{Real_table4}
		\resizebox{\columnwidth}{!}{\begin{tabular}{lrrr}
				\toprule
				Annotation&Background Genome&Group knockoff filter&FVG filter\\\midrule
				Exon&6.74\%&11.64\%&\textbf{19.55\%}\\
				Promoter&2.68\%&7.67\%&\textbf{12.03\%}\\
				GTeX&8.33\%&22.78\%&\textbf{24.44\%}\\
				eQTLGen&1.57\%&\textbf{6.56\%}&6.02\%\\
				Epimap&10.10\%&12.24\%&\textbf{12.41\%}\\
				ABC&8.87\%&9.76\%&\textbf{14.66\%}\\
				Cicero&0.98\%&3.09\%&\textbf{4.51\%}\\\midrule
				Total&30.52\%&56.54\%&\textbf{67.67\%}\\
				\bottomrule
		\end{tabular}}
	\end{table}
	
	\section{Discussions}\label{Discussions}

	As more genetic variants are sequenced using the rapidly developing whole-genome sequencing technology, it is now possible to explain more diseases' heritability via large-scale genetic studies. Noting that existing knockoff methods either lose power or lack informativeness because of strong correlations among variants in Alzheimer's disease analysis of the EADB-UKBB dataset, we develop a new filter {that uses group knockoffs to simultaneously select important groups and perform fine-mapping within each selected group}. Specifically, we first define a new family of conditional independence hypotheses that allow inference at feature level under group knockoffs construction. Based on theoretical properties of test statistics when group knockoffs are used, we develop the {FVG} knockoff filter with  FDR control. Utilizing penalized regressions, our approach can efficiently learn the sparse feature importance and refine the catching set within each group by selecting a subset of most promising features.  Extensive simulated experiments with real-world genetic data empirically validate the FDR control of our proposed filter. Compared with the existing knockoff filters, our FVG filter is shown {to return small and pure catching sets without missing many important features. When applying the proposed method to AD analysis of the real-world EADB--UKBB dataset, our FVG filter achieves the balance between power (identifies as many signals as the most powerful group knockoff filter) and precision of catching sets (with small average size and high average purity).}
	
	Simultaneous inference of conditional independence at different layers of groups is of great necessity as we need to balance the power and the resolution (informativeness) of groups \citep{Katsevich2019,Sesia2020,Gablenz2024}. For example, in genetic analysis, variant groups with higher resolution are typically smaller and have higher correlations, making it challenging to identify important signals, but easier to interpret if they are identified. Thus, it is interesting to incorporate the proposed filter into the multilayer testing framework \citep{Barber2017,Katsevich2019,Sesia2020,Gablenz2024} to perform simultaneous inference of $H^{(\text{fg})}$'s at multiple layers of group structure. In addition, it is also of great interest to adopt the proposed filter in causal inference. By elaborately designing group knockoffs that can eliminate confounding effects, the proposed filter can perform causal features selection with provable FDR control when unmeasured confounders exist. {We leave these tasks as future works}.
	
	 \begin{appendix}
	 	
	 	\section{Supplementary Figures for Section \ref{Existing}}\label{Supp_fig}
	 	
	 	\begin{figure}[H]
	 		\centering
	 		\begin{minipage}[t]{0.32\linewidth}
	 			\centering
	 			\includegraphics[width=\linewidth]{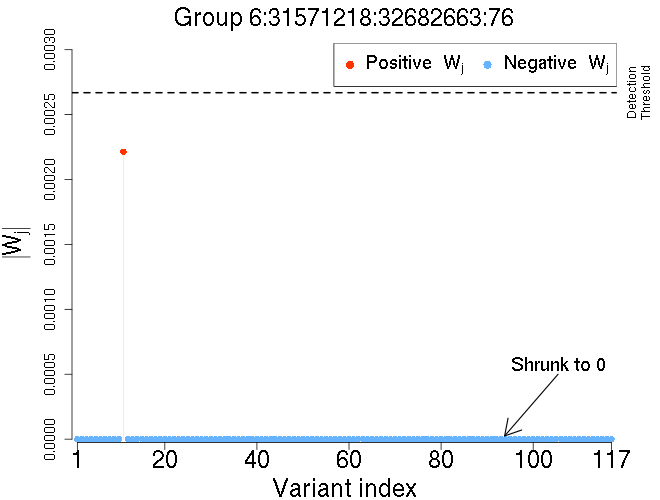}
	 			\par{(a) Manhattan plot of feature importance statistics of different genetic variants under KnockoffZoom \\(2 layers) at the feature-versus-feature level.}
	 		\end{minipage}
	 		\begin{minipage}[t]{0.32\linewidth}
	 			\centering
	 			\includegraphics[width=\linewidth]{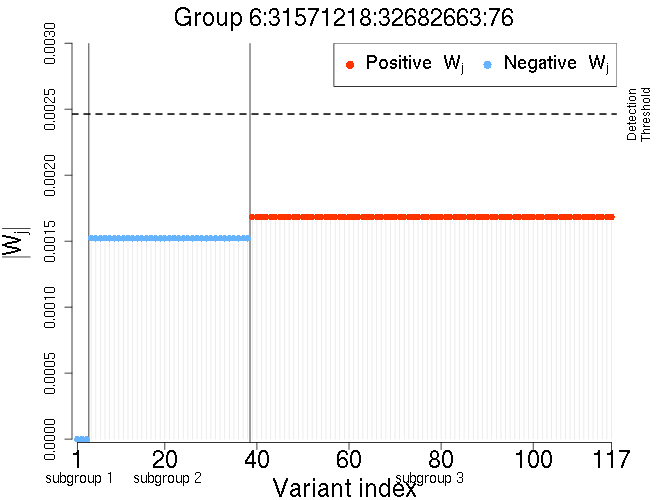}
	 			\par{(b) Manhattan plot of the contribution of different variants to the importance statistics of the group under  KnockoffZoom (3 layers) at the intermediate layer.}
	 		\end{minipage}
	 		\begin{minipage}[t]{0.32\linewidth}
	 			\centering
	 			\includegraphics[width=\linewidth]{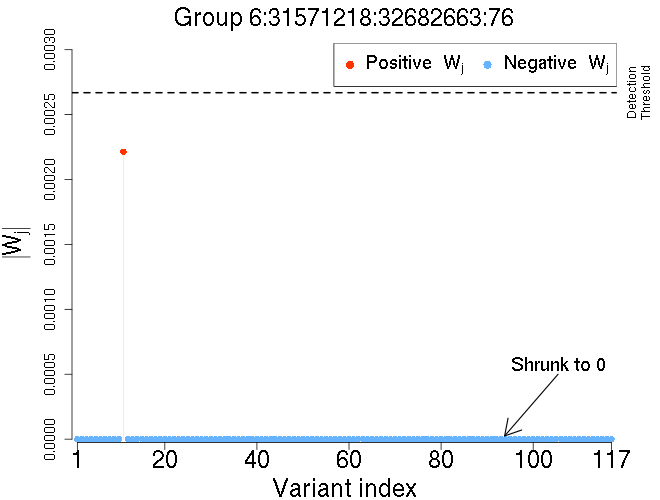}
	 			\par{(c) Manhattan plot of feature importance statistics of different genetic variants under KnockoffZoom \\(3 layers) at the feature-versus-feature level.}
	 		\end{minipage}
	 		\caption{Manhanttan plot under KnockoffZoom (2 layers) and KnockoffZoom (3 layers) in a high-LD region between positions 31571218 $\sim$ 32682663,  chromosome 6. Here we exclude plots at the group-versus-group level that is the same as Figure \ref{fig:motivating_cluster} (c).}
	 		\label{fig:motivating_cluster2}
	 	\end{figure}
	 	
	 	\section{Proof of Theorem \ref{thm:uniform_two}}\label{pr:uniform_two}
	 	Without loss of generality, we let $k=1$, $k^\dagger=2$, $j\in B_1$ and $j^\dagger\in B_2$. 
	 	
	 	\begin{itemize}
	 		\item[$\star$] {\rm \textbf{(Uniformity)}}:\\ 
	 		Since {\rm$H_j^{(\text{fg})}$} implies {\rm$H_k^{\text{(gg)}}$} where $j\in B_k$, if $j\in\mathcal{H}_0^{(\text{fg})}$, we have $H_1^{\text{(gg)}}$ is true (and $H_{j'}^{(\text{fg})}$ is true for all $j'\in B_k$) and $(X_{j},\textbf{X}_{B_1\setminus\{j\}})\perp Y|\textbf{X}_{-B_1}$ by (\ref{H_gg}). Because 
	 		$(\widetilde{X}_{j},\widetilde{\textbf{X}}_{B_1\setminus\{j\}},\widetilde{\textbf{X}}_{-B_1})\perp Y|(X_{j},\textbf{X}_{B_1\setminus\{j\}},\textbf{X}_{-B_1})$, we have
	 		$$(X_{j},\textbf{X}_{B_1\setminus\{j\}},\widetilde{X}_{j},\widetilde{\textbf{X}}_{B_1\setminus\{j\}},\widetilde{\textbf{X}}_{-B_1})\perp Y|\textbf{X}_{-B_1},$$
	 		and thus
	 		\begin{equation}
	 			\label{cond_single_two}
	 			(X_{j},\textbf{X}_{B_1\setminus\{j\}},\widetilde{X}_{j},\widetilde{\textbf{X}}_{B_1\setminus\{j\}})\perp Y|(\textbf{X}_{-B_1},\widetilde{\textbf{X}}_{-B_1}).
	 		\end{equation}
	 		
	 		By \eqref{group_exchangeability}, we have conditional on $(\textbf{X}_{-B_1},\widetilde{\textbf{X}}_{-B_1})$, 
	 		\begin{equation}\label{conditional_exchangeability}
	 			(X_{j},\textbf{X}_{B_1\setminus\{j\}},\widetilde{X}_{j},\widetilde{\textbf{X}}_{B_1\setminus\{j\}}){\displaystyle\mathop{=\joinrel=}^{D}}(\widetilde{X}_{j},\widetilde{\textbf{X}}_{B_1\setminus\{j\}},X_{j},\textbf{X}_{B_1\setminus\{j\}}).
	 		\end{equation}
	 		Thus, \eqref{cond_single_two}-\eqref{conditional_exchangeability} lead to 
	 		\begin{equation}\label{Equal_in_dist}
	 			\begin{aligned}[b]
	 				&([\mathbb{X}_j,\mathbb{X}_{B_1\setminus\{j\}},\widetilde{\mathbb{X}}_j,\widetilde{\mathbb{X}}_{B_1\setminus\{j\}},\{\mathbb{X}_{-B_1},\widetilde{\mathbb{X}}_{-B_1}\}],\textbf{y})\\
	 				{\displaystyle\mathop{=\joinrel=}^{D}}&([\widetilde{\mathbb{X}}_j,\widetilde{\mathbb{X}}_{B_1\setminus\{j\}},{\mathbb{X}}_j,{\mathbb{X}}_{B_1\setminus\{j\}},\{\mathbb{X}_{-B_1},\widetilde{\mathbb{X}}_{-B_1}\}],\textbf{y}).
	 			\end{aligned}
	 		\end{equation}
	 		
	 		Note that swapping $\mathbb{X}_j,\mathbb{X}_{B_1\setminus\{j\}}$ and $\widetilde{\mathbb{X}}_j,\widetilde{\mathbb{X}}_{B_1\setminus\{j\}}$ does not change $|W_1|,\ldots,|W_p|$ and $\text{\rm sign}(W_{j})$ for those false $H^{(\text{fg})}_{j}$'s, \eqref{ig_type_two} and \eqref{Equal_in_dist} implies that 
	 		conditional on $|W_1|,\ldots,|W_p|$ and $\text{\rm sign}(W_{j})$ for those false $H^{(\text{fg})}_{j}$'s,
	 		\begin{equation}\label{Uniform_swap2}
	 			(T_j,\widetilde{T}_j){\displaystyle\mathop{=\joinrel=}^{D}}(\widetilde{T}_j,T_j),
	 		\end{equation}
	 		and thus $\text{\rm sign}(W_{j})$ uniformly distributes on $\{+,-\}$.
	 		
	 		%
	 		\item[$\star$] {\rm \textbf{(Between-group Independence)}}:\\ By \eqref{group_exchangeability}, we have 
	 		{$$
	 			\begin{aligned}
	 				&({\textbf{X}}_{ B_1},\widetilde{\textbf{X}}_{ B_1},{\textbf{X}}_{ B_2},\widetilde{\textbf{X}}_{ B_2})|(\textbf{X}_{-(B_1\cup B_2)},\widetilde{\textbf{X}}_{-(B_1\cup B_2)})
	 				\\
	 				{\displaystyle\mathop{=\joinrel=}^{D}}
	 				&({\textbf{X}}_{ B_1},\widetilde{\textbf{X}}_{ B_1},\widetilde{\textbf{X}}_{ B_2},{\textbf{X}}_{ B_2})|(\textbf{X}_{-(B_1\cup B_2)},\widetilde{\textbf{X}}_{-(B_1\cup B_2)})
	 				\\
	 				{\displaystyle\mathop{=\joinrel=}^{D}}
	 				&(\widetilde{\textbf{X}}_{ B_1},{\textbf{X}}_{ B_1},{\textbf{X}}_{ B_2},\widetilde{\textbf{X}}_{ B_2})|(\textbf{X}_{-(B_1\cup B_2)},\widetilde{\textbf{X}}_{-(B_1\cup B_2)})
	 				\\
	 				{\displaystyle\mathop{=\joinrel=}^{D}}
	 				&(\widetilde{\textbf{X}}_{ B_1},{\textbf{X}}_{ B_1},\widetilde{\textbf{X}}_{ B_2},{\textbf{X}}_{ B_2})|(\textbf{X}_{-(B_1\cup B_2)},\widetilde{\textbf{X}}_{-(B_1\cup B_2)})
	 			\end{aligned}$$}
	 		If $j,j^\dagger\in\mathcal{H}_0^{(\text{fg})}$, by (\ref{ig_type_two}) and (\ref{Uniform_swap2}), we have conditional on $|W_1|,\ldots,|W_p|$ and $\text{\rm sign}(W_{j})$ for those false $H^{(\text{fg})}_{j}$'s,
	 		$$({T}_{j},\widetilde{T}_{j},{T}_{j^\dagger},\widetilde{T}_{j^\dagger}){\displaystyle\mathop{=\joinrel=}^{D}}({T}_{j},\widetilde{T}_{j},\widetilde{T}_{j^\dagger},{T}_{j^\dagger})
	 		{\displaystyle\mathop{=\joinrel=}^{D}}(\widetilde{T}_{j},{T}_{j},{T}_{j^\dagger},\widetilde{T}_{j^\dagger})
	 		{\displaystyle\mathop{=\joinrel=}^{D}}(\widetilde{T}_{j},{T}_{j},\widetilde{T}_{j^\dagger},{T}_{j^\dagger}).$$
	 		and thus $\text{\rm sign}(W_{j})$ and $\text{\rm sign}(W_{j^\dagger})$ are independent conditional on $|W_1|,\ldots,|W_p|$ and $\text{\rm sign}(W_{j})$ for those false $H^{(\text{fg})}_{j}$'s.
	 	\end{itemize}

	 	\section{Proof of Theorem \ref{thm:FDR}}\label{pr:FDR}
	 	
	 	As feature statistics $W_{j}$'s satisfy the between-group coin-flipping property, we have that $W_j$'s within each row of Table \ref{tab:align} satisfy the  coin-flipping property introduced in \citet{Candes2018}. In other words, for each $l$, $\text{sign}(W_{j})$'s for those $j\in \mathcal{C}_l\cap \mathcal{H}_0^{(\text{fg})}$ are i.i.d. coin flips conditional on $\{W_j|j\in \mathcal{C}_l\}$ and $\{\text{\rm sign}(W_{j})|j\in \mathcal{C}_l\setminus \mathcal{H}_0^{(\text{fg})}\}$. Thus, by Lemma 3 of \citet{Katsevich2019}, we have
	 	\begin{equation}\label{E_sup}
	 		\begin{aligned}[b]
	 			\textbf{E}\left\{\sup_{t>0}\frac{V^{(l)}_{\mathcal{H}_0,+}(t)}{1+V^{(l)}_{\mathcal{H}_0,-}(t)}\right\}\leq 1.93,\quad \text{where }\begin{cases}
	 				V^{(l)}_{\mathcal{H}_0,+}(t)=\#\{j\in \mathcal{C}_l\cap \mathcal{H}_0^{(\text{fg})}|W_{j}\geq t\},\\
	 				V^{(l)}_{\mathcal{H}_0,-}(t)=\#\{j\in \mathcal{C}_l\cap \mathcal{H}_0^{(\text{fg})}|W_{j}\leq -t\}.\\
	 			\end{cases}
	 		\end{aligned}
	 	\end{equation}
	 	Thus, the FDR of the rejection set {\rm $\mathcal{R}^{(\text{fg})}$} obtained by solving \eqref{opt} satisfies 
	 	\begin{align}
	 		{\text{FDR}}^{(\text{fg})}=&\mathbf{E}\Biggl\{\frac{\#(\mathcal{R}^{(\text{fg})}\cap \mathcal{H}_0^{(\text{fg})})}{1\vee\#\mathcal{R}^{(\text{fg})}}\Biggr\}\nonumber\\
	 		=&\mathbf{E}\Biggl\{\frac{\sum_{l=1}^{\infty} \text{I}\left(\max_{j\in \mathcal{C}_l}|W_{j}|\geq t^{(l)}\right)\cdot V^{(l)}_{\mathcal{H}_0,+}(t^{(l)})}{1\vee(\sum_{l}\{j\in \mathcal{C}_l|W_{j}\geq t^{(l)}\})}\Biggr\}\nonumber\\
	 		=&\mathbf{E}\Biggl\{\sum_{l=1}^{\infty}\text{I}\left(\max_{j\in \mathcal{C}_l}|W_{j}|\geq t^{(l)}\right)\cdot\frac{ 1+V^{(l)}_{\mathcal{H}_0,-}(t^{(l)})}{1\vee(\sum_{l}\{j\in \mathcal{C}_l|W_{j}\geq t^{(l)}\})}\cdot\frac{V^{(l)}_{\mathcal{H}_0,+}(t^{(l)})}{1+V^{(l)}_{\mathcal{H}_0,-}(t^{(l)})}\Biggr\}\nonumber\\
	 		\leq&\mathbf{E}\Biggl\{\sum_{l=1}^{\infty}\text{I}\left(\max_{j\in \mathcal{C}_l}|W_{j}|\geq t^{(l)}\right)\cdot\frac{1+\#\{j\in \mathcal{C}_l|W_{j}\leq -t^{(l)} \}}{1\vee[\sum_{l}\#\{j\in \mathcal{C}_l|W_{j}\geq t^{(l)} \}]}\cdot\frac{V^{(l)}_{\mathcal{H}_0,+}(t^{(l)})}{1+V^{(l)}_{\mathcal{H}_0,-}(t^{(l)})}\Biggr\}\nonumber\\
	 		\leq&\mathbf{E}\Biggl\{\sum_{l=1}^{\infty}\frac{v_l\alpha}{1.93}\cdot\frac{V^{(l)}_{\mathcal{H}_0,+}(t^{(l)})}{1+V^{(l)}_{\mathcal{H}_0,-}(t^{(l)})}\Biggr\}\nonumber\\
	 		\leq&\mathbf{E}\Biggl\{\sum_{l=1}^{\infty}\frac{v_l\alpha}{1.93}\cdot\sup_{t>0}\frac{V^{(l)}_{\mathcal{H}_0,+}(t)}{1+V^{(l)}_{\mathcal{H}_0,-}(t)}\Biggr\}\nonumber\\
	 		=&\sum_{l=1}^{\infty}\frac{v_l\alpha}{1.93}\cdot\mathbf{E}\Biggl\{\sup_{t>0}\frac{V^{(l)}_{\mathcal{H}_0,+}(t)}{1+V^{(l)}_{\mathcal{H}_0,-}(t)}\Biggr\}\nonumber\\
	 		=&\sum_{l=1}^{\infty}{v_l\alpha}\nonumber\\
	 		=&\alpha,\label{proof_of_FDR}
	 	\end{align}
	 	by \eqref{E_sup}.
	 	
	 	{\section{Comparisons between the Naive Filter and the FVG Filter}\label{Comparison2}
	 		In this section, we empirically compare the naive filter (Algorithm \ref{alg:G-FDR}), the FVG filter (Algorithm \ref{alg:G-FDR2}) with and without the $1.93$ factor using simulated datasets under the setting of Section \ref{Sim_setting}. Over 1000 simulated datasets of different sample sizes, the empirical FDR and power under different target FDR levels are reported in Figure \ref{fig:sample_size3}. Although both the naive filter and the FVG filter without the 1.93 factor does not possess theoretically rigorous FDR control, their empirical FDRs remain under control. In addition, they both have higher power than the FVG filter (Algorithm \ref{alg:G-FDR2}) with the $1.93$ factor.}
	 	
	 	\begin{figure}[t]
	 		\centering
	 		\begin{minipage}{1\linewidth}
	 			\centering
	 			\includegraphics[width=\linewidth]{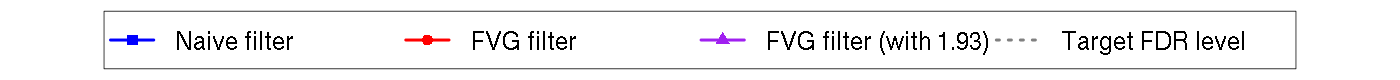}
	 		\end{minipage}\\
	 		\begin{minipage}{0.32\linewidth}
	 			\centering
	 			\includegraphics[width=\linewidth]{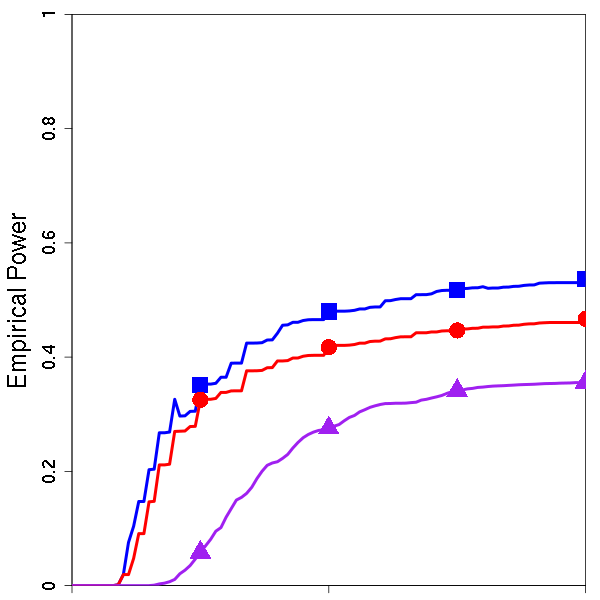}
	 			\includegraphics[width=\linewidth]{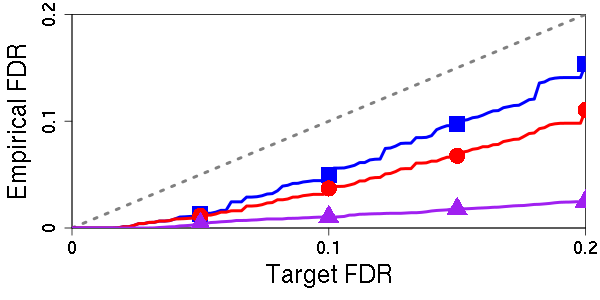}
	 			\par{(a) $n=500$.}
	 		\end{minipage}
	 		\begin{minipage}{0.32\linewidth}
	 			\centering
	 			\includegraphics[width=\linewidth]{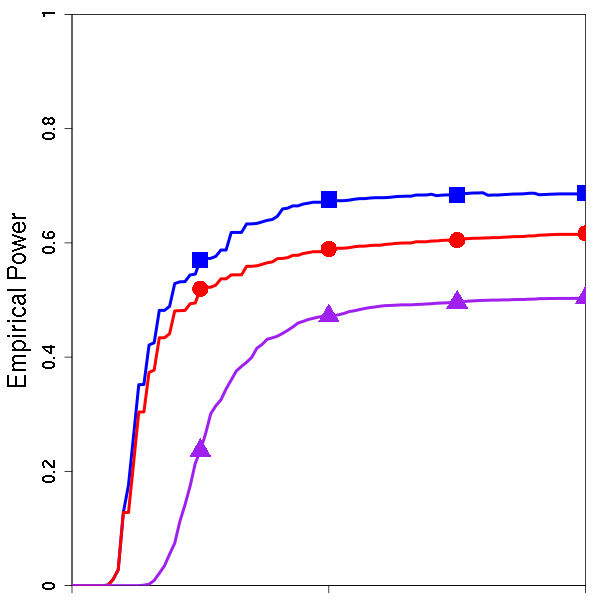}
	 			\includegraphics[width=\linewidth]{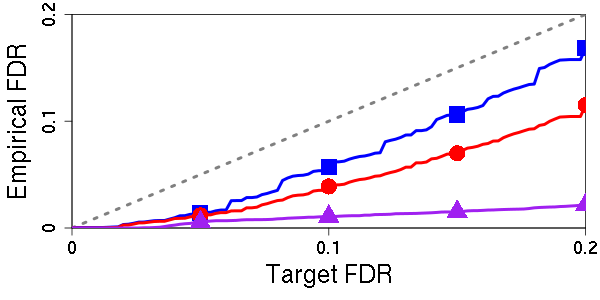}
	 			\par{(b) $n=1000$.}
	 		\end{minipage}
	 		\begin{minipage}{0.32\linewidth}
	 			\centering
	 			\includegraphics[width=\linewidth]{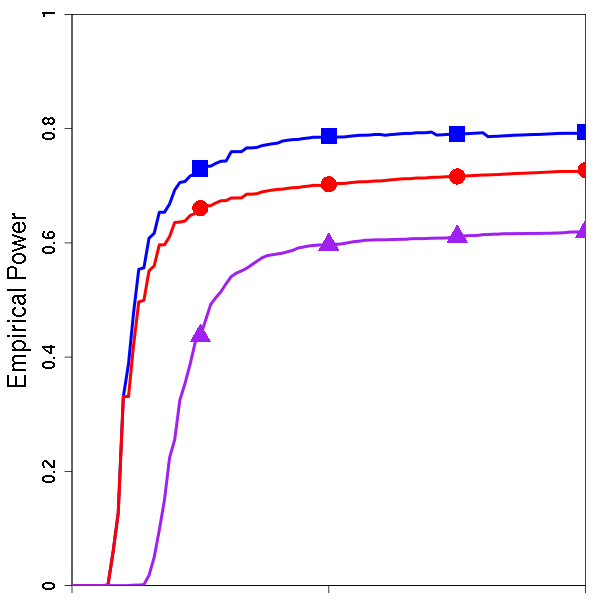}
	 			\includegraphics[width=\linewidth]{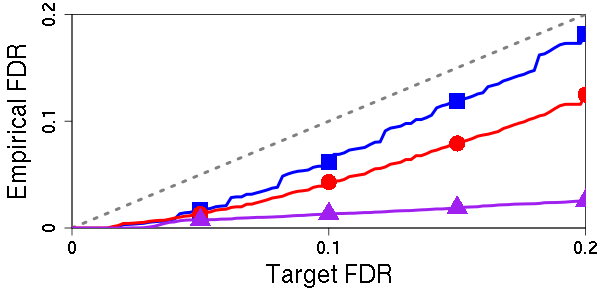}
	 			\par{(c) $n=2000$.}
	 		\end{minipage}
	 		\caption{Empirical FDR and power of the naive filter (Algorithm \ref{alg:G-FDR}), the FVG filter (Algorithm \ref{alg:G-FDR2}) with and without the $1.93$ factor with respect to the target FDR level ($\alpha$) over 1000 simulated genetic datasets of different sample sizes.}
	 		\label{fig:sample_size3}
	 	\end{figure} 
	 	
	 	\section{Extensions}\label{extensions}
	 	
	 	\subsection{Multiple Knockoffs}\label{multiple}
	 	
	 	Although Algorithm \ref{alg:G-FDR2} with group knockoffs can obtain the rejection set with FDR control, randomness in group knockoffs generation could produce greatly different {\rm $\mathcal{R}^{(\text{fg})}$} in different runs, especially in the case that the number of false $H_j^{(\text{fg})}$'s is small  and Algorithm \ref{alg:G-FDR2} could return an empty rejection set \citep{Gimenez2019b}. 
	 	Here we extend the proposed Algorithm \ref{alg:G-FDR2} to multiple group knockoffs $\widetilde{\textbf{X}}^{(1)},\ldots,\widetilde{\textbf{X}}^{(M)}$ where original and knockoff features are simultaneously exchangeable at the level of feature groups. That is to say, with the convention that $\textbf{X}=\widetilde{\textbf{X}}^{(0)}$,
	 	$\widetilde{\textbf{X}}^{(0)},\widetilde{\textbf{X}}^{(1)},\ldots,\widetilde{\textbf{X}}^{(M)}$  satisfy
	 	{\rm \begin{equation}\label{group_exchangeability_multiple}
	 			\begin{aligned}[b]
	 				&(\widetilde{\textbf{X}}^{(\sigma_1(0))}_{B_1},\ldots,\widetilde{\textbf{X}}^{(\sigma_K(0))}_{B_K},\widetilde{\textbf{X}}^{(\sigma_1(1))}_{B_1},\ldots,\widetilde{\textbf{X}}^{(\sigma_K(1))}_{B_K},\ldots,\widetilde{\textbf{X}}^{(\sigma_1(M))}_{B_1},\ldots,\widetilde{\textbf{X}}^{(\sigma_K(M))}_{B_K})\\
	 				{\displaystyle\mathop{=\joinrel=}^{D}}&
	 				(\widetilde{\textbf{X}}^{(0)}_{B_1},\ldots,\widetilde{\textbf{X}}^{(0)}_{B_K},\widetilde{\textbf{X}}^{(1)}_{B_1},\ldots,\widetilde{\textbf{X}}^{(1)}_{B_K},\ldots,\widetilde{\textbf{X}}^{(M)}_{B_1},\ldots,\widetilde{\textbf{X}}^{(M)}_{B_K})
	 			\end{aligned}
	 	\end{equation}}for any permutations $\sigma_1,\ldots,\sigma_K$ {of }$\{0,1,\ldots,M\}$. With importance scores $\{T_j^{(m)}|j=1,\ldots,p;m=0,\ldots,M\}$ obtained in an analogous way of (\ref{ig_type_two}), we follow \citet{He2021} to compute feature statistics 
	 	\begin{equation}\label{def:inf_stat}
	 		\kappa_{j}=\arg\max_m T_{j}^{(m)},\quad\tau_{j}=\max_m T_{j}^{(m)}-\text{\rm median}\{T_{j}^{(m)}|m\neq \kappa_{j}\},\quad j=1,\ldots,p.
	 	\end{equation}
	 	Specifically, $\kappa_{j}$ and $\tau_{j}$ are the multiple group knockoff counterparts of $\text{sign}(W_j)$ and $|W_j|$ respectively \citep{Gimenez2019b} that 
	 	\begin{enumerate}
	 		\item[$\star$] conditional on $\tau_1,\ldots,\tau_p$ and $\kappa_{j}$'s for those false $H^{(\text{fg})}_{j}$'s,
	 		\begin{enumerate}
	 			\item[$\diamond$] {\rm \textbf{(Uniformity)}} $\kappa_{j}$ uniformly distributes on $\{0,1,\ldots,M\}$  for all {\rm$j\in\mathcal{H}_0^{(\text{fg})}$};
	 			\item[$\diamond$] {\rm \textbf{(Between-Group Independence)}}
	 			for any $k\neq k^\dagger$, {\rm$\kappa_{j}$} and {\rm$\kappa_{j^\dagger}$} are independent for any $j\in B_k\cap \mathcal{H}_0^{(\text{fg})}$ and $j^\dagger\in B_{k^\dagger}\cap \mathcal{H}_0^{(\text{fg})}$.
	 		\end{enumerate}
	 	\end{enumerate}
	 	Based on $\kappa_{j}$'s and $\tau_{j}$'s, we provide Algorithm \ref{alg:G-FDR_multiple} as the multiple knockoff counterpart of  Algorithm \ref{alg:G-FDR2}.
	 	
	 	
	 	{\begin{algorithm}
	 			\caption{Feature filter with multiple group knockoffs (naive version).}\label{alg:G-FDR_multiple}
	 			\begin{algorithmic}[1]
	 				\STATE  \textbf{Input:} 
	 				Groups $B_1,\ldots,B_K$, feature statistics $\{(\kappa_{j},\tau_{j})|j=1,\ldots,p\}$  and the target level $\alpha>0$.
	 				\STATE Align $(\kappa_{j},\tau_{j})$'s as shown in Table \ref{tab:align2} such that $(\kappa_{j},\tau_{j})$'s for all features in the group $B_k$ are in the $k$-th column and $\tau_{(k1)}\geq \tau_{(k2)}\geq \cdots$.
	 				\STATE Compute budget $v_l$ for the $l$-th row using $\tau_1,\ldots,\tau_p$ for $l=1,2,\ldots$ such that $\sum_{l=1}^{\infty}v_l=1$.
	 				\STATE Compute grids $\mathcal{G}_l=\{1/v_l,2/v_l,\ldots,({neg}_l+1)/v_l\}$ for the $l$-th row where
	 				$neg_l=\#\{j\in \mathcal{C}_l|\kappa_{j}\neq 0\}$.
	 				\STATE Combine $\mathcal{G}_{comb}=(\cup_l \mathcal{G}_l)\cup \{0\}$ and sort values ${grid}_{(1)}> {grid}_{(2)}>\ldots$ in $\mathcal{G}_{comb}$. 
	 				\STATE Initialize $b=0$
	 				\REPEAT
	 				\STATE Update $b=b+1$.
	 				\STATE Compute 
	 				{\begin{equation}\label{t_alpha3}
	 						t^{(l)}=\min\Biggl\{t>0\Bigg|\frac{1+\#\{j\in \mathcal{C}_l|\kappa_{j}\neq 0,\tau_{j}\geq t\}}{v_l}\leq {grid}_{(b)}\Biggr\},
	 				\end{equation}}
	 				for $l=1,2,\ldots$.
	 				\UNTIL{$\frac{\text{I}\left(\max_{j\in \mathcal{C}_l}|W_{j}|\geq t\right)}{M}\times\frac{1+\#\{j\in \mathcal{C}_l|\kappa_j\neq 0,\tau_{j}\geq t^{(l)} \}}{1\vee[\sum_{l}\#\{j\in \mathcal{C}_l|\kappa_j= 0,\tau_{j}\geq t^{(l)} \}]}\leq \frac{v_l\alpha}{1.93}$ for all $l$.}
	 				\STATE \textbf{Output:} The rejection set $\mathcal{R}^{(\text{fg})}=\cup_{l}\{j\in  \mathcal{C}_l|\kappa_{j}= 0,\tau_{j}\geq t^{(l)}\}$.
	 			\end{algorithmic}
	 	\end{algorithm}}

	 	\renewcommand{\arraystretch}{0.9}
	 	\begin{table}
	 		\centering
	 		\caption{Alignment of feature statistics $(\kappa_{j},\tau_{j})$'s such that different groups correspond to different columns. Here, $(\kappa_{(kl)},\tau_{(kl)})$ corresponds to the feature from the $k$-th group that is aligned in the $l$-th row.}
	 		\label{tab:align2}
	 		\begin{tabular}{ccccc}
	 			\toprule
	 			Row&$B_1$&$B_2$&$B_3$&$\cdots$\\
	 			\midrule
	 			1&$(\kappa_{(11)},\tau_{(11)})$&$(\kappa_{(21)},\tau_{(21)})$&$(\kappa_{(31)},\tau_{(31)})$&$\cdots$\\
	 			\hdashline
	 			2&$(\kappa_{(12)},\tau_{(12)})$&$(\kappa_{(22)},\tau_{(22)})$&$(\kappa_{(32)},\tau_{(32)})$&$\cdots$\\
	 			\hdashline
	 			3&$(\kappa_{(13)},\tau_{(13)})$&$(\kappa_{(23)},\tau_{(23)})$&$(\kappa_{(33)},\tau_{(33)})$&$\cdots$\\
	 			\hdashline
	 			4&$(\kappa_{(14)},\tau_{(14)})$&$(\kappa_{(24)},\tau_{(24)})$&$(\kappa_{(34)},\tau_{(34)})$&$\cdots$\\
	 			\hdashline
	 			$\vdots$&$\vdots$&$\vdots$&$\vdots$&$\ddots$\\
	 			\bottomrule
	 		\end{tabular}
	 	\end{table}

	 	\subsection{Filter based on E-values}
	 	
	 	Another variation relies on the connection between the knockoff filter and $e$-values discussed in \citet{Ren2024}. Following \citet{Ren2024}'s way of computing $e$-values of a set of hypotheses whose $W_j$'s possess the coin-flipping property, we can analogously compute $e$-values of hypotheses whose $W_j$'s are aligned in the same row of Table \ref{tab:align} as shown in steps 4-7 of Algorithm \ref{alg:G-FDR4}. By doing so, an e-value $e_j$ is calculated for each hypothesis $H_{j}^{(\text{fg})}$ and we obtain the rejection set $\mathcal{R}^{(\text{fg})}$ by applying the e-BH procedure \citep{Wang2022} to $e$-values $e_1,\ldots,e_p$ (as shown in steps 8-10 of Algorithm \ref{alg:G-FDR4}).

	 	\begin{algorithm}
	 		\caption{Rigorous feature filter with group knockoffs using e-values.}\label{alg:G-FDR4}
	 		\begin{algorithmic}[1]
	 			\STATE  \textbf{Input:} Groups $B_1,\ldots,B_K$, feature statistics $W_1,\ldots,W_p$ and the target level $\alpha>0$.
	 			\STATE Align $W_j$'s as shown in Table \ref{tab:align} such that $W_j$'s for all features in the group $B_k$ are in the $k$-th column and $|W_{(k1)}|\geq |W_{(k2)}|\geq \cdots$.
	 			\STATE Compute the number of feature $p_l$ in the $l$-th row ($l=1,2,\ldots$).
	 			\FOR{$l=1,2,\ldots$}
	 			\STATE Compute
	 			$$
	 			t_{l,\alpha}=\min\Biggl\{t_l>0\Bigg|\frac{1+\#\{j \in \mathcal{C}_l|W_{j}\leq -t_l\}}{1\vee\#\{j \in \mathcal{C}_l|W_{j}\geq t_l\}}\leq \alpha/2\Biggr\}.
	 			$$
	 			\STATE Compute e-value $e_j$ for each $j\in \mathcal{C}_l$ as 
	 			$$
	 			e_j=\frac{p_l\times I(W_j\geq t_{l,\alpha})}{1+\#\{j \in \mathcal{C}_l|W_{j}\leq -t_{l,\alpha}\}}.
	 			$$
	 			\ENDFOR
	 			\STATE Compute $e_{(1)}\geq\ldots\geq e_{(p)}$ as the order statistics of $e_1,\ldots,e_p$.
	 			\STATE Compute $\widehat{J}=\max\{J|e_{(J)}\geq p/(\alpha J)\}$ or $\widehat{J}=0$ if $e_{(J)}< p/(\alpha J)$ for $J=1,\ldots,p$.
	 			\STATE \textbf{Output:} The rejection set $\mathcal{R}^{(\text{fg})}=\{j|e_j\geq e_{(\widehat{J})}\}$.
	 		\end{algorithmic}
	 	\end{algorithm}
	 	
	 	\begin{thm}\label{ThmS1}
	 		The rejection set {\rm $\mathcal{R}^{(\text{fg})}$} obtained by Algorithm \ref{alg:G-FDR4} controls {\rm $\text{FDR}^{(\text{fg})}$} at the target level  $\alpha>0$.
	 	\end{thm}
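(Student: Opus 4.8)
The plan is to reduce Theorem~\ref{ThmS1} to three facts: a row-wise coin-flipping property inherited from Theorem~\ref{thm:uniform_two}, the knockoff-to-e-value conversion of \citet{Ren2024} applied \emph{within each row} of Table~\ref{tab:align}, and the FDR guarantee of the e-BH procedure of \citet{Wang2022}. The overall logic mirrors the proof of Theorem~\ref{thm:FDR}: we exploit that each row contains at most one feature per group, so within a row the feature statistics behave exactly like ordinary model-X knockoff statistics, and the e-value machinery can be invoked row-by-row.

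First I would record that, by construction, each row $\mathcal{C}_l$ meets every group $B_k$ in at most one index; hence the between-group coin-flipping property of Theorem~\ref{thm:uniform_two} specializes, within $\mathcal{C}_l$, to the full coin-flipping property of \citet{Candes2018}: conditionally on $\{|W_j|\}_{j\in\mathcal{C}_l}$ together with the signs of the non-null $W_j$'s, the signs $\{\text{sign}(W_j):j\in\mathcal{C}_l\cap\mathcal{H}_0^{(\text{fg})}\}$ are i.i.d.\ uniform on $\{\pm1\}$. With this in hand I would treat $t_{l,\alpha}$ from Step~5 as the level-$(\alpha/2)$ knockoff$^{+}$ threshold for row $l$, which is a stopping time for the reverse filtration that reveals null signs in decreasing order of $|W_j|$. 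The Barber--Cand\`es supermartingale together with optional stopping at this threshold (\citet{Barber2015}, as used in the e-value construction of \citet{Ren2024}) then yields $\mathbf{E}\big[V^{(l)}_{\mathcal{H}_0,+}(t_{l,\alpha})/(1+V^{(l)}_{\mathcal{H}_0,-}(t_{l,\alpha}))\big]\le 1$, with $V^{(l)}_{\mathcal{H}_0,\pm}$ as in \eqref{E_sup}. Because $\#\{j\in\mathcal{C}_l:W_j\le -t_{l,\alpha}\}\ge V^{(l)}_{\mathcal{H}_0,-}(t_{l,\alpha})$, the e-values of Step~6 satisfy $\sum_{j\in\mathcal{C}_l\cap\mathcal{H}_0^{(\text{fg})}}e_j\le p_l\,V^{(l)}_{\mathcal{H}_0,+}(t_{l,\alpha})/(1+V^{(l)}_{\mathcal{H}_0,-}(t_{l,\alpha}))$, so $\mathbf{E}\big[\sum_{j\in\mathcal{C}_l\cap\mathcal{H}_0^{(\text{fg})}}e_j\big]\le p_l$. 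Summing over rows and using $\sum_l p_l=p$ gives the aggregate e-value validity $\sum_{j\in\mathcal{H}_0^{(\text{fg})}}\mathbf{E}[e_j]\le p$.

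Second, I would feed this into the e-BH argument. Since Step~9 rejects $j$ only when $e_j\ge e_{(\widehat J)}\ge p/(\alpha\,\#\mathcal{R}^{(\text{fg})})$, the standard e-BH bookkeeping gives $\text{FDP}^{(\text{fg})}=\#(\mathcal{R}^{(\text{fg})}\cap\mathcal{H}_0^{(\text{fg})})/(1\vee\#\mathcal{R}^{(\text{fg})})\le(\alpha/p)\sum_{j\in\mathcal{H}_0^{(\text{fg})}}e_j$; taking expectations and invoking the aggregate validity yields $\text{FDR}^{(\text{fg})}\le\alpha$. This final step requires no independence between rows, which is essential because the $e_j$'s across different rows are built from the same knockoffs and the same response $\textbf{y}$ and are therefore dependent.

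The main obstacle is the middle step, specifically verifying that $t_{l,\alpha}$ is a bona fide stopping time for the null-sign filtration and that the Barber--Cand\`es supermartingale applies row-by-row; this is exactly where the within-row coin-flipping property is consumed. A secondary point worth flagging in the write-up is that we obtain only the summed (compound) bound $\sum_{j\in\mathcal{H}_0^{(\text{fg})}}\mathbf{E}[e_j]\le p$ rather than the per-hypothesis validity $\mathbf{E}[e_j]\le1$; this is nonetheless precisely what the e-BH FDR proof consumes. Finally, note that the level-$(\alpha/2)$ threshold plays no role in the validity argument---any level produces a valid stopping time, and the constant here is $1$ rather than the $1.93$ of \eqref{E_sup}, since we evaluate at a single stopping time instead of taking a supremum over $t$---so the halved level is a power-oriented tuning choice that enlarges the surviving e-values relative to the e-BH cutoff.
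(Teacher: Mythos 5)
Your proposal is correct and follows essentially the same route as the paper's proof: row-wise coin-flipping (each row meets each group at most once), the Ren--Barber knockoff-to-e-value bound giving $\sum_{j\in\mathcal{C}_l\cap\mathcal{H}_0^{(\text{fg})}}\mathbf{E}[e_j]\le p_l$ after enlarging the denominator to include non-null negatives, summation to $\sum_{j\in\mathcal{H}_0^{(\text{fg})}}\mathbf{E}[e_j]\le p$, and the compound e-BH guarantee. The only difference is that you unpack the two cited results (the supermartingale/optional-stopping argument and the e-BH bookkeeping) rather than citing them, which the paper does via (7) and Theorem 2 of \citet{Ren2024}.
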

	 	
	 	\begin{proof}[Proof of Theorem \ref{ThmS1}]
	 		{\rm
	 			Because $W_j$'s that are in the same row of Table \ref{tab:align} possess the  coin-flipping property, by equation (7) of \citet{Ren2024}, we have
	 			$$\textbf{E}\left\{\frac{\sum_{j\in \mathcal{C}_l\cap \mathcal{H}_0^{(\text{fg})}} I(W_j\geq t_{l,\alpha})}{1+\sum_{j\in \mathcal{C}_l\cap \mathcal{H}_0^{(\text{fg})}} I(W_j\leq -t_{l,\alpha})}\right\}\leq 1.$$ As a result, we have for $l=1,2,\ldots,$
	 			\begin{align*}
	 				\sum_{j\in \mathcal{C}_l\cap \mathcal{H}_0^{(\text{fg})}}\textbf{E}\left\{e_j\right\}=&p_l\times \textbf{E}\left\{\frac{\sum_{j\in \mathcal{C}_l\cap \mathcal{H}_0^{(\text{fg})}} I(W_j\geq t_{l,\alpha})}{1+\sum_{j\in \mathcal{C}_l} I(W_j\leq -t_{l,\alpha})}\right\}\\\leq&
	 				p_l\times \textbf{E}\left\{\frac{\sum_{j\in \mathcal{C}_l\cap \mathcal{H}_0^{(\text{fg})}} I(W_j\geq t_{l,\alpha})}{1+\sum_{j\in \mathcal{C}_l\cap \mathcal{H}_0^{(\text{fg})}} I(W_j\leq -t_{l,\alpha})}\right\}\\
	 				\leq&
	 				p_l,
	 			\end{align*}
	 			and thus 
	 			\begin{align*}\sum_{j\in\mathcal{H}_0^{(\text{fg})}}\textbf{E}\left\{e_j\right\}=\sum_{l=1}^{\infty}\sum_{j\in \mathcal{C}_l\cap \mathcal{H}_0^{(\text{fg})}}\textbf{E}\left\{e_j\right\}
	 				\leq \sum_{l=1}^{\infty}p_l
	 				=p.\end{align*}
	 			Finally, by Theorem 2 of of \citet{Ren2024}, as $\sum_{j\in\mathcal{H}_0^{(\text{fg})}}\textbf{E}\left\{e_j\right\}\leq p$, the e-BH procedure in steps 8-10 of Algorithm \ref{alg:G-FDR4} returns the rejection set {\rm $\mathcal{R}^{(\text{fg})}$} with FDR control.
	 		}
	 	\end{proof}
	 	
	 	
	 	\section{Detailed Definition of Catching Sets of Resolution-Adaptive Approaches}\label{df_cc}
	 	
	 	Catching sets of resolution-adaptive approaches are formally defined as follows.
	 	
	 	\begin{itemize}
	 		\item[$\star$] For KnockoffZoom (2 layers), based on selection sets $\mathcal{R}_{\text{KZ}}^{(0)},\mathcal{R}_{\text{KZ}}^{(2)}$ at layers 0 and 2,  we derive catching sets as 
	 		$$\begin{aligned}
	 			&\mathcal{CC}_{\text{KZ}}=\mathcal{CC}_{\text{KZ}}^{(0)}\cup \mathcal{CC}_{\text{KZ}}^{(2)},\\
	 			\text{where }&\mathcal{CC}_{\text{KZ}}^{(0)}=
	 			\{\mathcal{S}_{j,\text{KZ}}^{(0)}=\{j\}| j\in \mathcal{R}_{\text{KZ}}^{(0)}\},\\
	 			&\mathcal{CC}_{\text{KZ}}^{(2)}=\{\mathcal{S}_{k,\text{KZ}}^{(2)}={B}^{(2)}_{k}| k\in \mathcal{R}_{\text{KZ}}^{(2)} \text{ and  } j\notin\mathcal{R}_{\text{KZ}}^{(0)}\text{ for all }j\in  {B}^{(2)}_{k}\}.\\
	 		\end{aligned}$$
	 		\item[$\star$] For KnockoffZoom (3 layers), based on selection sets $\mathcal{R}_{\text{KZ}}^{(0)},\mathcal{R}_{\text{KZ}}^{(1)},\mathcal{R}_{\text{KZ}}^{(2)}$ at layers 0, 1 and 2,  we derive catching sets as 
	 		$$\begin{aligned}
	 			&\mathcal{CC}_{\text{KZ}}=\mathcal{CC}_{\text{KZ}}^{(0)}\cup \mathcal{CC}_{\text{KZ}}^{(1)}\cup \mathcal{CC}_{\text{KZ}}^{(2)},\\
	 			\text{where }&\mathcal{CC}_{\text{KZ}}^{(0)}=
	 			\{\mathcal{S}_{j,\text{KZ}}^{(0)}=\{j\}| j\in \mathcal{R}_{\text{KZ}}^{(0)}\},\\
	 			&\mathcal{CC}_{\text{KZ}}^{(1)}=\{\mathcal{S}_{k,\text{KZ}}^{(1)}={B}^{(1)}_{k}| k\in \mathcal{R}_{\text{KZ}}^{(1)} \text{ and  } j\notin\mathcal{R}_{\text{KZ}}^{(0)}\text{ for all }j\in  {B}^{(1)}_{k}\},\\
	 			&\mathcal{CC}_{\text{KZ}}^{(2)}=\{\mathcal{S}_{k,\text{KZ}}^{(2)}={B}^{(2)}_{k}| k\in \mathcal{R}_{\text{KZ}}^{(2)} \text{ and  } k'\notin\mathcal{R}_{\text{KZ}}^{(1)}\text{ for all }{B}^{(1)}_{k'}\subset  {B}^{(1)}_{k}\},\\
	 		\end{aligned}$$
	 		
	 		\item[$\star$] For KeLP (2 layers), based on selection sets $\mathcal{R}_{\text{KeLP}}$, we derive catching sets as 
	 		$$\begin{aligned}
	 			&\mathcal{CC}_{\text{KeLP}}=\mathcal{CC}_{\text{KeLP}}^{(0)}\cup \mathcal{CC}_{\text{KeLP}}^{(2)},\\
	 			\text{where }& \mathcal{CC}_{\text{KeLP}}^{(0)}=\{\mathcal{S}_{j,\text{KeLP}}^{(0)}=\{j\}| \{j\}\in \mathcal{R}_{\text{KeLP}}\},\\
	 			&\mathcal{CC}_{\text{KeLP}}^{(2)}=\{\mathcal{S}_{k,\text{KeLP}}^{(2)}={B}^{(2)}_{k}| {B}^{(2)}_{k}\in \mathcal{R}_{\text{KeLP}}\}.\\
	 		\end{aligned}$$
	 		
	 		\item[$\star$] For KeLP (3 layers), based on selection sets $\mathcal{R}_{\text{KeLP}}$, we derive catching sets as 
	 		$$\begin{aligned}
	 			&\mathcal{CC}_{\text{KeLP}}=\mathcal{CC}_{\text{KeLP}}^{(0)}\cup \mathcal{CC}_{\text{KeLP}}^{(1)}\cup \mathcal{CC}_{\text{KeLP}}^{(2)},\\
	 			\text{where }& \mathcal{CC}_{\text{KeLP}}^{(0)}=\{\mathcal{S}_{j,\text{KeLP}}^{(0)}=\{j\}| \{j\}\in \mathcal{R}_{\text{KeLP}}\},\\
	 			&\mathcal{CC}_{\text{KeLP}}^{(1)}=\{\mathcal{S}_{k,\text{KeLP}}^{(1)}={B}^{(1)}_{k}| {B}^{(1)}_{k}\in \mathcal{R}_{\text{KeLP}}\}.\\
	 			&\mathcal{CC}_{\text{KeLP}}^{(2)}=\{\mathcal{S}_{k,\text{KeLP}}^{(2)}={B}^{(2)}_{k}| {B}^{(2)}_{k}\in \mathcal{R}_{\text{KeLP}}\}.\\
	 		\end{aligned}$$
	 		
	 	\end{itemize}
	 	
	 	\section{Our Implementation of KeLP \citep{Gablenz2024}}\label{KeLP}
	 	
	 	We describe our detailed implementation of KeLP (2 layers) under the simulated experiments in Section \ref{Simulation}, while the implementation of KeLP (3 layers) is analogous.
	 	
	 	As a procedure based on $e$-value, KeLP first compute $e$-values for all $H_j^{\text{(ff)}}$'s at the feature-versus-feature level and all $H_k^{\text{(gg)}}$'s at the group-versus-group level as follows.
	 	
	 	\begin{itemize}
	 		\item For $j=1,\ldots,p$ where $p=1157$, the $e$-value $e_j^{\text{(ff)}}$ of $H_j^{\text{(ff)}}$ is computed as 
	 		$$e_j^{\text{(ff)}}=\frac{c^{(\text{ff})}\times I(W_j^{\text{(ff)}}\geq t^{\text{(ff)}}_{\gamma^{\text{(ff)}}})}{1+\#\{j |W_j^{\text{(ff)}}\leq -t^{\text{(ff)}}_{\gamma^{\text{(ff)}}}\}},$$
	 		where $W_j^{\text{(ff)}}$ is the feature importance score of $H_j^{\text{(ff)}}$ computed using model-X knockoff and $$	t^{\text{(ff)}}_{\gamma^{\text{(ff)}}}=\min\Biggl\{t>0\Bigg|\frac{1+\#\{j |W_j^{\text{(ff)}}\leq -t\}}{1\vee\#\{j |W_j^{\text{(ff)}}\geq t\}}\leq \gamma^{\text{(ff)}}\Biggr\}.$$
	 		In other words, $e_j^{\text{(ff)}}$'s are $e$-values computed by model-X knockoff  filter under the target FDR level $\gamma^{\text{(ff)}}$.
	 		
	 		\item The $e$-value $e_k^{\text{(gg)}}$ of $H_k^{\text{(gg)}}$ with respect to ${B}^{(2)}_{k}$ is computed as 
	 		$$e_k^{\text{(gg)}}=\frac{c^{(\text{gg})}\times I(W_k^{\text{(gg)}}\geq t^{\text{(gg)}}_{\gamma^{\text{(gg)}}})}{1+\#\{j |W_k^{\text{(gg)}}\leq -t^{\text{(gg)}}_{\gamma^{\text{(gg)}}}\}},$$
	 		where $W_k^{\text{(gg)}}$ is the feature importance score of $H_k^{\text{(gg)}}$ computed using group knockoff with respect to ${B}^{(2)}_{1},\ldots,{B}^{(2)}_{345}$ and $$	t^{\text{(gg)}}_{\gamma^{\text{(gg)}}}=\min\Biggl\{t>0\Bigg|\frac{1+\#\{j |W_k^{\text{(gg)}}\leq -t\}}{1\vee\#\{j |W_k^{\text{(gg)}}\geq t\}}\leq \gamma^{\text{(gg)}}\Biggr\}.$$
	 		In other words, $e_k^{\text{(gg)}}$'s are $e$-values computed by group knockoff filter with respect to ${B}^{(2)}_{1},\ldots,{B}^{(2)}_{345}$ under the target FDR level $\gamma^{\text{(gg)}}$.
	 	\end{itemize}
	 	
	 	Based on $e_j^{\text{(ff)}}$'s and $e_k^{\text{(gg)}}$'s, KeLP returns the resolution-adaptive selection set under the FDR level $\alpha$ via solving the constrained optimization problem
	 	\begin{equation}
	 		\label{opt2}
	 		\begin{aligned}[b]
	 			&\max\left(\sum_{j=1}^{1157}x_j^{(\text{ff})}+\sum_{k=1}^{345}\frac{x_k^{(\text{gg})}}{|{B}^{(2)}_{k}|}\right),\\
	 			\text{s.t. }&x_1^{(\text{ff})},\ldots,x_{1157}^{(\text{ff})},x_1^{(\text{gg})},\ldots,x_{345}^{(\text{gg})}\in\{0,1\},\\
	 			&D_{\text{total}}-\alpha e_j^{(\text{ff})}\left(\sum_{j=1}^{1157}x_j^{(\text{ff})}+\sum_{k=1}^{345}{x_k^{(\text{gg})}}\right)\leq 	D_{\text{total}}\times (1-x_j^{(\text{ff})}),\quad j=1,\ldots,1157,\\
	 			&D_{\text{total}}-\alpha e_k^{(\text{gg})}\left(\sum_{j=1}^{1157}x_j^{(\text{ff})}+\sum_{k=1}^{345}{x_k^{(\text{gg})}}\right)\leq 	D_{\text{total}}\times (1-x_k^{(\text{gg})}),\quad k=1,\ldots,345,\\
	 			&x_j^{(\text{ff})}+\sum_{k=1}^{345}x_k^{(\text{gg})}\times \text{I}(j\in {B}^{(2)}_{k})\leq 1,\quad j=1,\ldots,1157.
	 		\end{aligned}
	 	\end{equation}
	 	Here, $D_{\text{total}}=1157+345$ equals the total number of $e$-values computed at both levels. Finally, catching sets can be obtained as $\{j|\hat{x}_j^{(\text{ff})}=1\}\cup\{{B}^{(2)}_{k}|\hat{x}_k^{(\text{gg})}=1\}$ based on the solution $(\hat{x}_1^{(\text{ff})},\ldots,\hat{x}_{1157}^{(\text{ff})},\hat{x}_1^{(\text{gg})},\ldots,\hat{x}_{345}^{(\text{gg})})$ of \eqref{opt2}.
	 	
	 	In \citet{Gablenz2024}, it is said that choice of $c^{(\text{ff})}$, $c^{(\text{gg})}$ and $\gamma^{(\text{ff})}$, $\gamma^{(\text{gg})}$  are left to users. Thus, in our simulation (Section \ref{Simulation}), we use $c^{(\text{ff})}=1157$ and $c^{(\text{gg})}=345$ to equal the number of hypotheses in both levels. Suppose $\mathcal{R}^{\text{(ff)}}_{\gamma^{\text{(ff)}}}$ is the selection set computed by model-X knockoff  filter under the target FDR level $\gamma^{\text{(ff)}}$ and $\mathcal{R}^{\text{(gg)}}_{\gamma^{\text{(gg)}}}$ is the selection set computed by the group knockoff  filter under the target FDR level $\gamma^{\text{(gg)}}$, we have that 
	 	\begin{itemize}
	 		\item $e_j^{\text{(ff)}}\geq \frac{1157}{\gamma^{\text{(ff)}}\times|\mathcal{R}^{\text{(ff)}}_{\gamma^{\text{(ff)}}}|}$
	 		for all $j \in \mathcal{R}^{\text{(ff)}}_{\gamma^{\text{(ff)}}}$;
	 		\item $e_k^{\text{(gg)}}\geq \frac{345}{\gamma^{\text{(gg)}}\times|\mathcal{R}^{\text{(gg)}}_{\gamma^{\text{(gg)}}}|}$
	 		for all $k \in \mathcal{R}^{\text{(gg)}}_{\gamma^{\text{(gg)}}}$.
	 	\end{itemize}
	 	To avoid the case that no catching set is obtained with nonempty $\mathcal{R}^{\text{(ff)}}_{\gamma^{\text{(ff)}}}$ and $\mathcal{R}^{\text{(gg)}}_{\gamma^{\text{(gg)}}}$, we elaborately tune $\gamma^{(\text{ff})}$, $\gamma^{(\text{gg})}$ as follows. Specifically, our target is to include two trivial points in the feasible region of \eqref{opt2}, including
	 	\begin{itemize}
	 		\item the point where $x_j^{(\text{ff})}=\text{I}(j\in \mathcal{R}^{\text{(ff)}}_{\gamma^{\text{(ff)}}})$ for $j=1,\ldots,1157$ and $x_k^{(\text{gg})}=0$ for $k=1,\ldots,345$;
	 		\item the point where $x_j^{(\text{ff})}=0$ for $j=1,\ldots,1157$ and $x_k^{(\text{gg})}=\text{I}(k\in \mathcal{R}^{\text{(gg)}}_{\gamma^{\text{(gg)}}})$ for $k=1,\ldots,345$.
	 	\end{itemize}  
	 	By doing so, we have nonempty catching sets as long as either $\mathcal{R}^{\text{(ff)}}_{\gamma^{\text{(ff)}}}$ or $\mathcal{R}^{\text{(gg)}}_{\gamma^{\text{(gg)}}}$ is nonempty.
	 	
	 	To achieve this, we require that
	 	$$\begin{aligned}
	 		\frac{1157}{\gamma^{\text{(ff)}}\times|\mathcal{R}^{\text{(ff)}}_{\gamma^{\text{(ff)}}}|}&\geq \frac{D_{\text{total}}}{\alpha \times|\mathcal{R}^{\text{(ff)}}_{\gamma^{\text{(ff)}}}|},\\
	 		\frac{345}{\gamma^{\text{(gg)}}\times|\mathcal{R}^{\text{(gg)}}_{\gamma^{\text{(gg)}}}|}&\geq \frac{D_{\text{total}}}{\alpha \times|\mathcal{R}^{\text{(gg)}}_{\gamma^{\text{(gg)}}}|},
	 	\end{aligned}$$
	 	from the third and forth line of \eqref{opt2}. This leads to our choice that $\gamma^{\text{(ff)}}=1157/D_{\text{total}}\times \alpha$ and $\gamma^{\text{(gg)}}=345/D_{\text{total}}\times \alpha$. When more than $2$ layers are included in the inference, we analogously implement KeLP with $\gamma$ of each layers as $\alpha$ times ratio between the number of hypotheses in this layer and the total number of hypotheses in all layers and the constant $c$ of each layer as the  number of hypotheses in this layer.

	 \end{appendix}

	\bibliographystyle{apalike}
	\bibliography{sample}
	
\end{document}